\newtheorem*{proof}{\indent Proof}
\newtheorem{proposition}{\indent Proposition}
\newcommand{\Rmnum}[1]{\expandafter\@slowromancap\romannumeral #1@}
\begin{document}

\title{Flexible-Duplex Cell-Free Architecture for Secure Uplink Communications in Low-Altitude Wireless~Networks}

\author{Wei~Shi, Wei~Xu,~\IEEEmembership{Fellow,~IEEE}, Yongming~Huang,~\IEEEmembership{Fellow,~IEEE}, Jiacheng~Yao, Wenhao Hu, \\and Dongming~Wang,~\IEEEmembership{Member,~IEEE} 

\thanks{W. Shi, W. Xu, Y. Huang, J. Yao, W. Hu, and D. Wang are with the Purple Mountain Laboratories, Nanjing 211111, China (e-mail: shiwei1@pmlabs.com.cn). W. Xu, Y. Huang, J. Yao, W. Hu, and D. Wang are also with the National Mobile Communications Research Laboratory, Southeast University, Nanjing 210096, China (e-mail: \{wxu, huangym, jcyao, wenhaohu, wangdm\}@seu.edu.cn).}

}

\maketitle

\begin{abstract}

Low-altitude wireless networks (LAWNs) are expected to play a central role in future 6G infrastructures, yet uplink transmissions of uncrewed aerial vehicles (UAVs) remain vulnerable to eavesdropping due to their limited transmit power, constrained antenna resources, and highly exposed air-ground propagation conditions. To address this fundamental bottleneck, we propose a flexible-duplex cell-free (CF) architecture in which each distributed access point (AP) can dynamically operate either as a receive AP for UAV uplink collection or as a transmit AP that generates cooperative artificial noise (AN) for secrecy enhancement. Such AP-level duplex flexibility introduces an additional spatial degree of freedom that enables distributed and adaptive protection against wiretapping in LAWNs. Building upon this architecture, we formulate a max-min secrecy-rate problem that jointly optimizes AP mode selection, receive combining, and AN covariance design. This tightly coupled and nonconvex optimization is tackled by first deriving the optimal receive combiners in closed form, followed by developing a penalty dual decomposition (PDD) algorithm with guaranteed convergence to a stationary solution. To further reduce computational burden, we propose a low-complexity sequential scheme that determines AP modes via a heuristic metric and then updates the AN covariance matrices through closed-form iterations embedded in the PDD framework. Simulation results show that the proposed flexible-duplex architecture yields substantial secrecy-rate gains over CF systems with fixed AP roles. The joint optimization method attains the highest secrecy performance, while the low-complexity approach achieves over $90\%$ of the optimal performance with an order-of-magnitude lower computational complexity, offering a practical solution for secure uplink communications in LAWNs.

\begin{IEEEkeywords}
Low-altitude wireless networks (LAWNs), UAV communications, cell-free, flexible duplex, physical layer security.
\end{IEEEkeywords}

\end{abstract}
\section{Introduction}
Low-altitude wireless networks (LAWNs), consisting of uncrewed aerial vehicles (UAVs), aerial robots, and low-altitude platform stations, are emerging as a fundamental component of next-generation intelligent infrastructures \cite{he2025ubiquitous,wu2025toward,wshi1,yuan2025ground}. By enabling agile and wide-area aerial connectivity below 3~km altitude, LAWNs support a variety of mission-critical applications, such as last-mile logistics, emergency response, urban surveillance, smart-city sensing, and precision agriculture \cite{wxu,jin2025co}. The unique three-dimensional (3D) operating space of LAWNs offers unprecedented flexibility in deployment and service, making them an indispensable enabler for the upcoming sixth-generation (6G) networks.

Despite these advantages, LAWNs differ from conventional terrestrial networks due to the high mobility of UAVs, rapid topology variations, and their strong reliance on line-of-sight (LoS) links to ensure high-quality air-ground communications. While these features enhance spectral efficiency and connectivity, they also expose UAV links to severe security vulnerabilities \cite{sun2019physical,li2019secure}. In particular, the open LoS channels in low-altitude airspace are highly susceptible to eavesdropping attacks. Most existing LAWN implementations still adopt the conventional cellular paradigm, which relies on fixed cell partitioning and centralized coordination \cite{geraci2022will}. Such designs inevitably lead to strong inter-cell interference, limited boundary coverage, and privacy leakage, which become even more pronounced under the stringent size, weight, and power constraints of UAV platforms. Consequently, traditional cellular designs are increasingly inadequate for ensuring secure UAV communications in dynamic low-altitude environments.

To overcome these limitations, the cell-free (CF) framework has emerged as a promising paradigm for UAV-enabled LAWNs \cite{elhoushy2021cell}. By allowing a large number of distributed access points (APs) to jointly serve users without cell boundaries, the CF architecture inherently enhances spatial diversity, mitigates inter-cell interference, and improves user-centric fairness \cite{wang2022joint,yao2023robust,yang2025privacy}. Recent works have investigated CF-based UAV networks from multiple perspectives. In \cite{xu2023air}, an air-ground cooperative CF architecture was proposed to jointly optimize downlink beamforming, fronthaul compression, and UAV positioning, improving both coverage and energy efficiency. The authors of~\cite{d2020analysis} developed a user-centric association mechanism under Rician fading to characterize the spectral efficiency, demonstrating the fairness and capacity benefits of the CF operation. Furthermore, \cite{zheng2021uav,tentu2022uav} investigated the impact of hardware impairments and energy efficiency optimization. In addition to network architecture design, recent studies have focused on trajectory optimization and power control in CF-based UAV networks. The authors of \cite{liu2025energy} proposed a multi-agent reinforcement learning algorithm to balance energy consumption and network capacity, while a power control scheme for ultra-reliable and low-latency communication scenarios was developed in \cite{elwekeil2022power}. Collectively, these works confirm the effectiveness of CF architectures in supporting highly dynamic and resource-constrained UAV communications.

Beyond performance optimization, the CF paradigm also offers new opportunities for enhancing physical layer security (PLS) owing to its intrinsic capability for wide-area cooperative interference management and distributed signal coordination \cite{shi2025combating}. By exploiting cooperative transmission, joint reception, and artificial noise (AN)-aided beamforming, CF systems can strengthen legitimate links while suppressing potential eavesdropping. However, most existing research on CF-based secure transmission has focused on terrestrial communication scenarios rather than low-altitude networks. For example, \cite{hoang2018cell} investigated secure transmission under pilot spoofing attacks via optimized power allocation, while \cite{zhang2021secure} analyzed secrecy performance under hardware impairments. The integration of reconfigurable intelligent surfaces for secrecy enhancement was explored in \cite{elhoushy2021exploiting}, and a stochastic-geometry-based analysis of scalable CF networks was presented in~\cite{ma2023secrecy}, where the secrecy performance was evaluated in terms of outage-based secrecy transmission rate and ergodic secrecy rate under random AP, user, and eavesdropper deployments. Although these efforts have demonstrated the potential of CF architectures for enhancing PLS, they primarily focused on terrestrial user scenarios and overlooked the distinctive characteristics of low-altitude UAV networks.

Compared with terrestrial scenarios, UAV communications exhibit strong LoS propagation, limited onboard transmit resources, and high mobility, making them substantially more vulnerable to interception. These unique attributes undermine the effectiveness of many ground-oriented PLS schemes and motivate the exploration of new secure transmission strategies tailored to low-altitude environments. Recent studies have taken initial steps toward this direction.
For instance, pilot allocation and power control in CF networks with eavesdroppers were considered in \cite{chen2023pilot}, where secrecy-rate lower bounds were derived. While these contributions illustrate the feasibility of CF-enhanced PLS in UAV scenarios, systematic investigations tailored to the highly dynamic and LoS-dominant nature of LAWNs remain limited. In particular, the cooperative role of distributed APs in simultaneously strengthening legitimate transmissions and suppressing eavesdropping in low-altitude environments has not been thoroughly explored.

To further improve the flexibility and spectral efficiency of CF architectures, the concept of network-assisted full-duplex (NAFD) has been developed as a unified duplexing paradigm that subsumes traditional half-duplex, hybrid-duplex, and co-frequency co-time full-duplex operations \cite{wang2019performance,xia2021joint}. In CF networks empowered by NAFD, each AP can dynamically operate in uplink reception or downlink transmission mode, with its duplex direction adaptively selected according to instantaneous channel conditions. This configuration, often referred to as flexible-duplex CF, enables distributed APs to cooperatively balance data transmission and interference suppression across space and time. 
Unlike conventional full-duplex systems that suffer from severe self-interference and cross-link interference, flexible-duplex CF mitigates these effects by spatially decoupling the transmit and receive functions among APs while maintaining centralized baseband coordination at the central processing unit (CPU). As demonstrated in~\cite{wang2019performance}, flexible-duplex CF networks can achieve higher spectral efficiency than both full-duplex and half-duplex configurations with the same total antenna resources, owing to macro-diversity and distributed antenna gains. Similarly, the work in~\cite{xia2021joint} showed that dynamically scheduling transmit-APs (T-APs) and receive-APs (R-APs) allows flexible-duplex CF networks to maximize the uplink–downlink sum rate under fronthaul and signal-to-interference-plus-noise ratio (SINR) constraints. These studies validate the practicality and efficiency of AP-level flexible-duplex adaptation for distributed CF systems.

Motivated by these insights, this paper leverages the flexible-duplex CF architecture to address PLS challenges in a LAWN. In the proposed design, each distributed AP can dynamically switch between receive mode (acting as an R-AP for UAV uplink data collection) and transmit mode (acting as a T-AP to emit cooperative AN for eavesdropper suppression). This AP-level mode selection strategy not only inherits the interference-mitigation and resource-efficiency benefits of NAFD, but also introduces a new spatial degree of freedom (DoF) for improving secrecy performance in low-altitude environments. The main contributions are summarized as follows.

\begin{itemize}
\item We propose a novel flexible-duplex CF architecture tailored for secure UAV uplink communications in LAWNs. Each AP dynamically operates either as an R-AP performing cooperative reception or as a T-AP transmitting spatially structured AN for eavesdropper suppression. This AP-level duplex flexibility introduces an additional spatial DoF that enables distributed and adaptive secrecy enhancement. A complete uplink–Eve signal model is established, capturing inter-user interference, AN coupling, cross-AP interference, and the spatial interference shaping effect induced by mode selection. 
\item We formulate a max–min secrecy-rate (MMSR) optimization problem that jointly determines the AP modes, receive combiners, and AN covariance matrices under practical duplexing and power constraints. This problem is highly coupled and intrinsically nonconvex. To tackle these challenges, we first derive the optimal receive combiners in closed form, enabling a simplified but equivalent reformulation of the original problem. Building upon this, we develop a penalty dual decomposition (PDD)-based algorithm for jointly optimizing the AP modes and AN, with guaranteed convergence to a stationary point. Within the PDD framework, the AN covariance matrices and AP modes are iteratively optimized using successive convex approximation (SCA) and majorization minimization (MM) techniques, respectively.
\item To further reduce computational overhead, we propose a low-complexity algorithm that sequentially determines the AP modes and AN covariance matrices. Specifically, we introduce a heuristic metric for AP mode selection that jointly accounts for signal reception quality, eavesdropping-jamming capability, and inter-AP interference, and select the AP modes accordingly. Based on the obtained AP modes, we further employ the PDD framework to address the remaining nonconvex AN optimization problem. Moreover, within each PDD iteration, we derive closed-form optimal updates for the involved variables, thereby enabling efficient and low-complexity iterative refinement.
\item Comprehensive simulations are conducted to verify the effectiveness of the proposed flexible-duplex CF architecture. The results show that jointly optimizing AP modes and AN design yields substantial secrecy-rate gains over conventional CF schemes with fixed AP roles, underscoring the importance of adaptive AP duplexing. Moreover, the proposed methods exhibit consistent fairness improvements, providing more balanced secrecy performance across UAV users compared with baselines. These findings confirm the practical value of the proposed framework for secure uplink communications in LAWNs.
\end{itemize} 

The remainder of this paper is organized as follows. Section~\uppercase\expandafter{\romannumeral2} introduces the system model and formulates the MMSR problem. Section~\uppercase\expandafter{\romannumeral3} presents the optimal receive combiners and the proposed PDD-based joint optimization algorithm. Section~\uppercase\expandafter{\romannumeral4} develops the low-complexity sequential AP mode selection and AN design scheme. Simulation results and conclusions are given in Sections~\uppercase\expandafter{\romannumeral5} and~\uppercase\expandafter{\romannumeral6}, respectively.

\emph{Notation:} Boldface lowercase (uppercase) letters denote vectors (matrices). $\mathbb{C}$ represents the set of complex numbers. The superscripts $(\cdot)^T$, $(\cdot)^\ast$, and $(\cdot)^H$ denote the transpose, conjugate, and Hermitian transpose, respectively, while $(\cdot)^{-1}$ and $(\cdot)^{\dagger}$ represent the matrix inverse and the Moore–Penrose pseudoinverse. ${\cal CN}(\boldsymbol{\mu},\mathbf{\Sigma})$ denotes a circularly symmetric complex Gaussian distribution with mean vector $\boldsymbol{\mu}$ and covariance matrix $\mathbf{\Sigma}$. $\mathbb{E}[\cdot]$ denotes the expectation of a random variable (RV). ${\rm diag}\left\{\cdot\right\}$ indicates a diagonal matrix. $|\cdot|$, $\Vert\cdot\Vert$, and $\Vert\cdot\Vert_F$ denote the modulus, Euclidean norm, and Frobenius norm, respectively. $\mathrm{Tr}(\cdot)$ denotes the matrix trace, $\odot$ denotes the Hadamard product, and $\mathbf{A}\succeq 0$ indicates that $\mathbf{A}$ is Hermitian positive semi-definite. The operator $\mathrm{vec}(\cdot)$ denotes matrix vectorization, while $\mathrm{mat}(\cdot)$ denotes its inverse operator that reshapes a vector into a matrix of appropriate dimensions.
\section{System Model}
We consider a flexible-duplex CF architecture for secure uplink communications in a LAWN, as illustrated in Fig.~\ref{fig1}. The network comprises $M$ distributed APs, each equipped with $N_a$ antennas, collaboratively serving $K$ UAVs over a shared time-frequency resource unit. All the APs are connected to a CPU via high-capacity fronthaul links, enabling joint signal processing and coordination across the entire network. 

Under the flexible-duplex operation, each AP can dynamically select between two working modes, i.e., the receive mode and the transmit mode, depending on instantaneous system requirements and channel conditions. When operating in the receive mode, the R-AP collects uplink signals transmitted by the UAVs and forwards them to the CPU for centralized decoding. In contrast, when operating in the transmit mode, the T-AP emits AN to deliberately jam the eavesdropper, thereby enhancing the PLS of uplink communications in the LAWN. 

The following subsections present the detailed signal models for the legitimate receivers and the eavesdropper, based on which the secrecy-rate optimization problem is developed.

\begin{figure}[!t]
\centering
\includegraphics[width=3.2in]{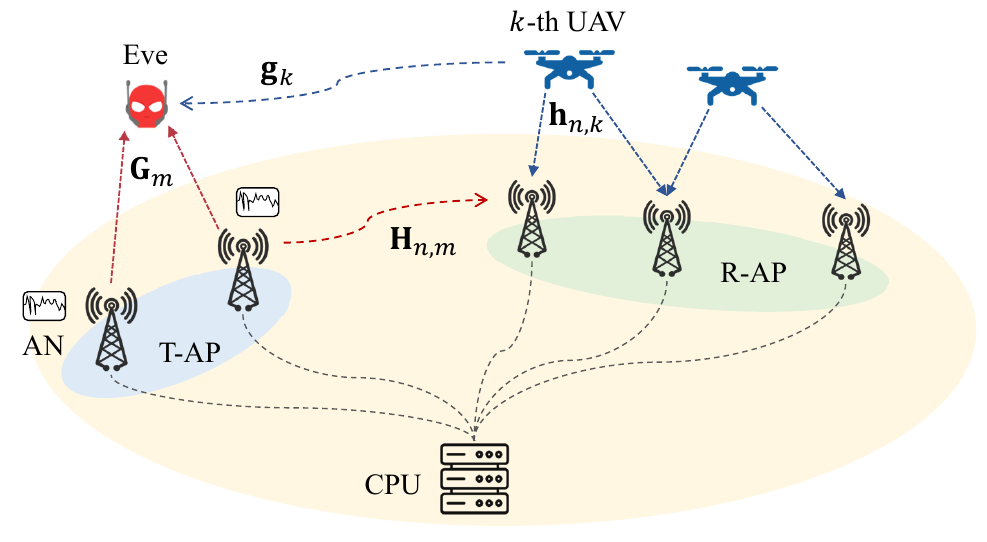}  
\caption{Illustration of a flexible-duplex CF architecture for secure uplink communications in a LAWN.}
\label{fig1}
\end{figure}

\subsection{Uplink Signal and SINR Model}
Let $x_m^C\in\left\{0,1\right\}$ and $x_m^J\in\left\{0,1\right\}$ denote binary mode-selection variables of the $m$-th AP, satisfying $x_m^C+x_m^J=1$. When $x_m^C=1$, the AP operates in the receive mode; when $x_m^J=1$, it operates in the transmit mode and transmits an AN vector $\mathbf{a}_m\sim\mathcal{CN}(\mathbf{0},\mathbf{V}_m)$. Here, $\mathbf{V}_m=\mathbb{E}\left[\mathbf{a}_m\mathbf{a}_m^H\right]\succeq\mathbf{0}$ denotes the AN covariance matrix that determines its spatial structure and power allocation.

Accordingly, the received signal at the $n$-th AP is
\begin{align}
    \mathbf{y}_n=x_n^C\left(\sum_{k=1}^{K}{\mathbf{h}_{n,k}\sqrt{p_k}s_k}+\sum_{m=1}^{M}{x_m^J\mathbf{H}_{n,m}\mathbf{a}_m}+\mathbf{n}_n\right),
\end{align}
where $p_k$ and $s_k$ denote the transmit power and data symbol of UAV $k$ with $\mathbb{E}[\left|s_k\right|^2]=1$, respectively. $\mathbf{h}_{n,k}\in\mathbb{C}^{N_a\times1}$ represents the channel vector from uplink UAV $k$ to R-AP $n$, whose large-scale fading depends on the UAV's position, $\mathbf{H}_{n,m}\in\mathbb{C}^{N_a\times N_a}$ denotes the interference channel from T-AP $m$ to R-AP $n$, and $\mathbf{n}_n\sim\mathcal{CN}(\mathbf{0},\sigma_n^2\mathbf{I}_{N_a})$ denotes the additive white Gaussian noise (AWGN) with power $\sigma_n^2$.

The APs forward all received data signals to the CPU for processing, enabling a centralized cell-free implementation. By stacking $\mathbf{y}_n$ for all $n\in \{1,2,...,M\}$, the received signal $\mathbf{y}\in\mathbb{C}^{MN_a\times1}$ at the CPU is given by
\begin{align}\label{rs_cpu}
    \mathbf{y}=\left[\mathbf{y}_1^T,\ldots,\mathbf{y}_M^T\right]^T.
\end{align}

To detect $s_k$, the CPU employs a receive combining vector $\mathbf{u}_k=\left[\mathbf{u}_{1,k}^T,\ldots,\mathbf{u}_{M,k}^T\right]^T\in\mathbb{C}^{MN_a\times1}$, and thus, the combined signal for UAV $k$ is
\begin{align}\label{rs}
    r_k=\,&\mathbf{u}_k^H\mathbf{y}=\sum_{n=1}^{M}{\mathbf{u}_{n,k}^H\mathbf{y}_n}\nonumber\\
       =\,&\sum_{n=1}^{M}\sum_{i=1}^{K}{\mathbf{u}_{n,k}^Hx_n^C\mathbf{h}_{n,i}\sqrt{p_i}s_i}\nonumber\\
       &+\sum_{n=1}^{M}\sum_{m=1}^{M}{\mathbf{u}_{n,k}^Hx_n^Cx_m^J\mathbf{H}_{n,m}\mathbf{a}_m}+\sum_{n=1}^{M}{\mathbf{u}_{n,k}^Hx_n^C\mathbf{n}_n}.
\end{align}

For notational convenience, we define a diagonal mode-selection matrix as
\begin{align}\label{MS}
\mathbf{S}_c\triangleq\mathrm{diag}\left\{x_1^C\mathbf{I}_{N_a},\ldots,x_M^C\mathbf{I}_{N_a}\right\}\in\mathbb{C}^{MN_a\times M N_a}, 
\end{align}
where each diagonal block $x_m^C\mathbf{I}_{N_a}$ indicates whether the $m$-th AP is active in the receive mode. In other words, $\mathbf{S}_c$ acts as a spatial selection operator that preserves the received signal components from APs operating in receive mode ($x_m^C=1$) while nulling those from transmit-mode APs ($x_m^C=0$).

Then, the instantaneous SINR of UAV $k$ at the CPU is derived as~(\ref{rsinr}), where 
$\mathbf{h}_k \triangleq [\mathbf{h}_{1,k}^T,\ldots,\mathbf{h}_{M,k}^T]^T \in \mathbb{C}^{MN_a\times1}$ 
denotes the stacked UAV–AP channel vector, and 
$\mathbf{H}_m \triangleq [\mathbf{H}_{1,m}^T,\ldots,\mathbf{H}_{M,m}^T]^T \in \mathbb{C}^{MN_a\times N_a}$ 
represents the stacked interference-channel matrix from the $m$-th AP to all APs.

\begin{figure*}
\begin{align}\label{rsinr}
    \gamma_k=\frac{p_k\left|\mathbf{u}_k^H\mathbf{S}_c\mathbf{h}_k\right|^2}{\sum_{i\neq k}^{K}{p_i\left|\mathbf{u}_k^H\mathbf{S}_c\mathbf{h}_i\right|^2}+\sum_{m=1}^{M}{x_m^J\mathbf{u}_k^H\mathbf{S}_c\mathbf{H}_m\mathbf{V}_m\mathbf{H}_m^H\mathbf{S}_c\mathbf{u}_k}+\sigma_n^2\mathbf{u}_k^H\mathbf{S}_c\mathbf{u}_k}.
\end{align}
\hrulefill
\end{figure*}

\subsection{Eavesdropper Signal and SINR Model}
We consider an eavesdropper (Eve) equipped with $N_e$ antennas attempting to wiretap the legitimate UAV transmissions. Then, the received signal at Eve is expressed as
\begin{align}
    \mathbf{y}_e=\sum_{k=1}^{K}{\mathbf{g}_k\sqrt{p_k}s_k}+\sum_{m=1}^{M}{x_m^J\mathbf{G}_m\mathbf{a}_m}+\mathbf{n}_e,
\end{align}
where $\mathbf{g}_k\in\mathbb{C}^{N_e\times1}$ denotes the channel from UAV $k$ to Eve, $\mathbf{G}_m\in\mathbb{C}^{N_e\times N_a}$ represents the channel from AP $m$ to Eve, and $\mathbf{n}_e\sim\mathcal{CN}(\mathbf{0},\sigma_e^2\mathbf{I}_{N_e})$ is the AWGN at Eve.

If Eve employs a linear combining vector $\mathbf{w}_{e,k}$ to detect the signal of UAV $k$, the resulting SINR is given by (\ref{esinr}).
\begin{figure*}
\begin{align}\label{esinr}
    \gamma_{e,k}=\frac{p_k\left|\mathbf{w}_{e,k}^H\mathbf{g}_k\right|^2}{\sum_{i\neq k}{p_i\left|\mathbf{w}_{e,k}^H\mathbf{g}_i\right|^2}+\sum_{m=1}^{M}{x_m^J\mathbf{w}_{e,k}^H\mathbf{G}_m\mathbf{V}_m\mathbf{G}_m^H\mathbf{w}_{e,k}}+\sigma_e^2\Vert\mathbf{w}_{e,k}\Vert^2}.
\end{align}
\hrulefill
\end{figure*}

\subsection{Achievable Rate and Secrecy Rate}
Based on the SINR expressions derived in (\ref{rsinr}) and (\ref{esinr}), the achievable rate of UAV $k$ at the CPU and the eavesdropping rate at Eve are respectively given by
\begin{align}
    R_k=\log_2\left(1+\gamma_k\right), \, R_{e,k}=\log_2\left(1+\gamma_{e,k}\right).
\end{align}

Accordingly, the instantaneous secrecy rate of UAV $k$ is expressed as \cite{shi2022secure,gao2024artificial,wshitwc}
\begin{align}
   R_k^{\mathrm{sec}}=\left[R_k-R_{e,k}\right]^+=\left[\log_2\frac{1+\gamma_k}{1+\gamma_{e,k}}\right]^+,
\end{align}
where $\left[x\right]^+=\max\{0,x\}$. This expression captures the instantaneous secrecy performance at each time slot, which is particularly relevant in low-altitude UAV communications characterized by fast time-varying channels and dynamic network topologies. Hence, optimizing the instantaneous secrecy rate allows real-time adaptation to spatial and channel fluctuations.

\subsection{Optimization Problem Formulation}
To jointly exploit the cooperative potential of distributed APs and suppress eavesdropping threats, we formulate a dynamic AP mode selection problem that maximizes the minimum secrecy rate among all UAVs. Specifically, the problem is formulated as
\begin{align}\label{problem}
   \mathop{\text{maximize}}_{\left\{x_m^C,x_m^J\right\},\left\{\mathbf{u}_k,\mathbf{V}_m\succeq\mathbf{0}\right\}}\quad&{\min_{k}{R_k^{\mathrm{sec}}}}\nonumber\\
   \text{subject to}\quad & \mathrm{C1}:x_m^C\!+\!x_m^J\!=\!1, x_m^C,\!x_m^J\!\!\in\!\!\left\{0,1\right\}, \forall m, \nonumber\\
   & \mathrm{C2}:\mathrm{Tr}\left(\mathbf{V}_m\right)\le x_m^JP_m, \forall m,
\end{align}
where the binary mode-selection variables $x_m^C$ and $x_m^J$ indicate whether AP $m$ operates in reception or jamming mode, respectively. Constraint C1 enforces the mutual exclusivity of AP modes, while C2 limits the AN power of each AP according to its maximum transmit capability $P_m$.

This MMSR formulation enforces UAV fairness by avoiding secrecy-rate outage and balancing confidentiality across dynamic LAWN links. By adaptively coordinating cooperative reception and jamming, it establishes a principled basis for the flexible-duplex CF strategy to achieve robust PLS.

\section{Proposed Joint Optimization Algorithm}\label{SEC3}
In this section, we solve problem (\ref{problem}) in an iterative manner. Specifically, we first derive the optimal receive combining vectors in closed-form expressions, which simplifies the original problem. Then, we employ the PDD method to jointly optimize the AP mode selection and AN covariance matrices.

\subsection{Closed-Form Solutions to Receive Combining}
We note that the received combining vector $\mathbf{u}_k$ only affects the uplink SINR of the $k$-th UAV, while having no impact on the other UAVs or the eavesdropping performance. Therefore, for any given mode selection and AN signal, the design criterion of $\mathbf{u}_k$ is to maximize the SINR $\gamma_k$ at the CPU. Based on this criterion, its closed-form optimal solution is provided in the following proposition.

\begin{proposition} \label{proposition1}
For any given AP mode selection $\left\{x_m^C, x_m^J\right\}_{m=1}^M$ and AN covariance matrices $\{\mathbf{V}\}_{m=1}^M$, the optimal receive combiner $\mathbf{u}_k$ for the $k$-th UAV is equal to
\begin{align}
    \mathbf{u}_k^\ast\!=\!&\left(\!\sum_{i\neq k}^{K}{p_i\mathbf{S}_c\mathbf{h}_i\mathbf{h}_i^H\mathbf{S}_c}\!+\!\!\sum_{m=1}^{M}\!{x_m^J\mathbf{S}_c\mathbf{H}_m\!\mathbf{V}_m\mathbf{H}_m^H\mathbf{S}_c}\!+\!\sigma_n^2\mathbf{I}\!\right)^{-1}\nonumber\\
    &\times\mathbf{S}_c\mathbf{h}_k.
\end{align}
\end{proposition}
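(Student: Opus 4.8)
The plan is to recognize problem of maximizing $\gamma_k$ over $\mathbf{u}_k$ as a generalized Rayleigh quotient and invoke the standard MVDR/MMSE-type argument. First I would observe from (\ref{rsinr}) that $\gamma_k$ depends on $\mathbf{u}_k$ only through the ratio
\begin{align}
\gamma_k=\frac{p_k\,\mathbf{u}_k^H\mathbf{S}_c\mathbf{h}_k\mathbf{h}_k^H\mathbf{S}_c\mathbf{u}_k}{\mathbf{u}_k^H\mathbf{R}_k\mathbf{u}_k},
\end{align}
where I would define the interference-plus-noise covariance as seen at the CPU,
\begin{align}
\mathbf{R}_k\triangleq\sum_{i\neq k}^{K}p_i\mathbf{S}_c\mathbf{h}_i\mathbf{h}_i^H\mathbf{S}_c+\sum_{m=1}^{M}x_m^J\mathbf{S}_c\mathbf{H}_m\mathbf{V}_m\mathbf{H}_m^H\mathbf{S}_c+\sigma_n^2\mathbf{S}_c.
\end{align}
Here I would note the minor subtlety that the genuine noise term is $\sigma_n^2\mathbf{u}_k^H\mathbf{S}_c\mathbf{u}_k$ rather than $\sigma_n^2\mathbf{u}_k^H\mathbf{u}_k$; since $\mathbf{S}_c$ is an orthogonal projection, components of $\mathbf{u}_k$ in the nullspace of $\mathbf{S}_c$ contribute nothing to numerator, denominator interference, or noise, so without loss of generality one may restrict $\mathbf{u}_k$ to the range of $\mathbf{S}_c$, on which $\mathbf{S}_c$ acts as the identity; this is what lets the stated $\sigma_n^2\mathbf{I}$ appear in the proposition in place of $\sigma_n^2\mathbf{S}_c$, and it is the main point requiring care.

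Next I would apply the Cauchy–Schwarz / generalized eigenvalue argument: for a Hermitian positive definite (on relevant subspace) $\mathbf{R}_k$ and a rank-one numerator matrix $p_k\mathbf{S}_c\mathbf{h}_k\mathbf{h}_k^H\mathbf{S}_c$, the quotient is maximized by $\mathbf{u}_k\propto\mathbf{R}_k^{-1}\mathbf{S}_c\mathbf{h}_k$, with the maximum value $p_k\mathbf{h}_k^H\mathbf{S}_c\mathbf{R}_k^{-1}\mathbf{S}_c\mathbf{h}_k$. Concretely, substituting $\mathbf{v}=\mathbf{R}_k^{1/2}\mathbf{u}_k$ reduces it to maximizing $|\mathbf{v}^H\mathbf{R}_k^{-1/2}\mathbf{S}_c\mathbf{h}_k\sqrt{p_k}|^2/\|\mathbf{v}\|^2$, which Cauchy–Schwarz solves immediately. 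Since $\gamma_k$ is scale-invariant in $\mathbf{u}_k$, any nonzero scalar multiple is also optimal, so I would simply report the canonical representative $\mathbf{u}_k^\ast=\mathbf{R}_k^{-1}\mathbf{S}_c\mathbf{h}_k$, matching the proposition once $\sigma_n^2\mathbf{S}_c$ is replaced by $\sigma_n^2\mathbf{I}$ under the range restriction noted above.

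Finally I would address invertibility: strictly, $\mathbf{R}_k$ as written is only positive semidefinite (it vanishes on the nullspace of $\mathbf{S}_c$), so the inverse in the statement should be understood as acting on the range of $\mathbf{S}_c$, or equivalently one adds $\sigma_n^2$ times the projection onto that nullspace (which does not change the optimizer restricted to the range). I expect the only real obstacle to be presenting this regularization cleanly so that the closed form in the proposition is literally correct; the optimization core is the textbook generalized Rayleigh-quotient maximization and requires no new ideas. I would therefore structure the written proof as: (i) rewrite $\gamma_k$ as a Rayleigh quotient; (ii) restrict to $\mathrm{range}(\mathbf{S}_c)$ and justify it; (iii) invoke Cauchy–Schwarz to obtain $\mathbf{u}_k^\ast\propto\mathbf{R}_k^{-1}\mathbf{S}_c\mathbf{h}_k$; (iv) pick the stated normalization and conclude.
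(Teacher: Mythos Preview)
Your proposal is correct and follows essentially the same approach as the paper: both recognize the SINR as a generalized Rayleigh quotient and handle the projection $\mathbf{S}_c$ by restricting $\mathbf{u}_k$ to $\mathrm{range}(\mathbf{S}_c)$, after which the noise term $\sigma_n^2\mathbf{u}_k^H\mathbf{S}_c\mathbf{u}_k$ becomes $\sigma_n^2\|\mathbf{u}_k\|^2$ and the standard MVDR/MMSE argument applies. The only cosmetic difference is that the paper justifies the restriction physically (T-APs do not receive, so $\mathbf{u}_{m,k}=\mathbf{0}$ for $x_m^C=0$), which immediately yields $\sigma_n^2\mathbf{I}$ and a globally invertible $\mathbf{\Sigma}_k$, thereby sidestepping the pseudoinverse/subspace discussion you flag in step~(iv).
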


\begin{proof}
    Please refer to Appendix \ref{appb}.  \hfill $\square$
\end{proof}

By substituting $\mathbf{u}_k^\ast$ into (\ref{rsinr}), the equivalent simplified SINR expression becomes
\begin{align}
    \gamma_{k}^\ast=\,&p_k\mathbf{h}_k^H\mathbf{S}_c\left(\sum_{i\neq k}^{K}{p_i\mathbf{S}_c\mathbf{h}_i\mathbf{h}_i^H\mathbf{S}_c}\right.\nonumber\\
    &\left.+\sum_{m=1}^{M}{x_m^J\mathbf{S}_c\mathbf{H}_m\mathbf{V}_m\mathbf{H}_m^H\mathbf{S}_c}+\sigma_n^2\mathbf{I}\right)^{-1}\mathbf{S}_c\mathbf{h}_k.
\end{align}
Consequently, problem (\ref{problem}) reduces to an optimization problem over AP mode selection and AN design, whose specific solution is presented as follows.
\subsection{PDD Framework for AP Mode and AN Optimization}
Firstly, to handle the complicated fractional terms, we introduce slack variables and rewrite the problem in (\ref{problem}) as
\begin{align}\label{problem13}
   \mathop{\text{maximize}}_{\substack{\left\{x_m^C,x_m^J\right\},\left\{\mathbf{V}_m\succeq\mathbf{0}\right\},\\r,\left\{\tau_k,\chi_k\right\}}}\quad&{r}\nonumber\\
   \text{subject to}\quad & \mathrm{C1}-\mathrm{C2}, \nonumber\\
   & \mathrm{C3}:1+\gamma_k\geq2^{\tau_k}, \forall k,\nonumber\\
   & \mathrm{C4}:1+\gamma_{e,k}\le2^{\chi_k}, \forall k, \nonumber\\
   & \mathrm{C5}:\tau_k-\chi_k\geq r, \forall k,
\end{align}
where $r,2^{\tau_k}$, and $2^{\chi_k}$ respectively replace $\min_k R_{k}^{\text{sec}}$, $1+\gamma_k$, and $1+\gamma_{e,k}$ in the objective function. 

In the above problem, the AN covariance matrix $\mathbf{V}_m$ and $x_{m}^C$, as well as $x_m^J$, are still highly coupled, making it difficult to handle. To this end, we defining the following variables ${\overline{\mathbf{V}}}_m\!\triangleq \!x_m^J\mathbf{V}_m$ and $\mathbf{B}_k\!\triangleq\!\sum_{i\neq k}^{K}{p_i\mathbf{S}_c\mathbf{h}_i\mathbf{h}_i^H\mathbf{S}_c}\!\!+\!\!\sum_{m=1}^{M}\!{x_m^J\!\mathbf{S}_c\mathbf{H}_m\!\!\mathbf{V}_m\mathbf{H}_m^H\mathbf{S}_c}\!\!+\!\!\sigma_n^2\mathbf{I}$. Due to the introduction of ${\overline{\mathbf{V}}}_m$, $\mathbf{V}_m$ and $x_{m}^J$ are unified into one variable, and the relevant constraints are rewritten as follows:
\begin{align}
    &\mathrm{\overline{C1}}: \delta\left(\mathrm{Tr}({\overline{\mathbf{V}}}_m)\right)+x_m^C\le1,\,x_m^C\in \{0,1\},\, \forall m, \nonumber\\ &\mathrm{\overline{C2}}:\mathrm{Tr}\left({\overline{\mathbf{V}}}_m\right)\le P_m, \forall m,
\end{align}
where $\delta(\cdot)$ outputs 1 if and only if the input is non-zero, otherwise it is $0$. Also, with the introduction of $\mathbf{B}_k$, constraint $\mathrm{C3}$ is equivalent to
\begin{align}
    &\mathrm{\overline{C3}}: 1+{p_k\mathbf{h}}_k^H\mathbf{S}_c\mathbf{B}_k^{-1}\mathbf{S}_c\mathbf{h}_k\geq2^{\tau_k},\,\forall k,\nonumber \\
    &\mathrm{C6}: \mathbf{B}_k\!-\!\sigma_n^2\mathbf{I}\!=\!\mathbf{S}_c\!\!\left(\!\sum_{i\neq k}^{K}{p_i\mathbf{h}_i\mathbf{h}_i^H}\!\!+\!\!\sum_{m=1}^{M}{\mathbf{H}_m{\overline{\mathbf{V}}}_m\mathbf{H}_m^H}\!\right)\!\mathbf{S}_c, \,\forall k.
\end{align}
The left hand side (LHS) of $\mathrm{\overline{C3}}$ is jointly convex to $\mathbf{S}_c$ and $\mathbf{B}_k$. However, the equality constraint $\mathrm{C6}$ remains challenging.

The PDD method is a well-known and effective approach for handling such equality constraints and decoupling multivariate optimization \cite{pdd1}. Therefore, we employ the PDD method to address this challenging issue. Specifically, by introducing auxiliary variables $\{y_m^C
\}_{m=1}^M$ and $\left\{\overline{\mathbf{B}}_k\right\}_{k=1}^K$, we first recast problem (\ref{problem13}) as 
\begin{align}\label{problem16}
   & \mathop{\text{maximize}}_{\substack{\left\{x_m^C,y_m^C\right\},\left\{\mathbf{V}_m\succeq\mathbf{0}\right\},\\r,\left\{\tau_k,\chi_k\right\},\mathbf{B}_k}}\quad{r}\nonumber\\
   & \text{subject to}\quad \mathrm{\overline{C2}},\mathrm{C5},\mathrm{C6},\nonumber \\
   &\mathrm{\overline{C1}}: \delta\left(\mathrm{Tr}({\overline{\mathbf{V}}}_m)\right)+y_m^C\le1, \forall m, \nonumber\\
   &  \mathrm{\overline{C3}}:1+p_k\mathbf{h}_k^H\mathbf{S}_c\overline{\mathbf{B}}_k^{-1}\mathbf{S}_c\mathbf{h}_k\geq2^{\tau_k},\forall k\nonumber\\
   & \mathrm{\overline{C4}}:\sum_{m=1}^{M}{\frac{1}{p_k\left|\mathbf{w}_{e,k}^H\mathbf{g}_k\right|^2}\mathbf{w}_{e,k}^H\mathbf{G}_m{\overline{\mathbf{V}}}_m\mathbf{G}_m^H\mathbf{w}_{e,k}}\nonumber\\
   &\ \ \ \ \ \geq\!\frac{1}{2^{\chi_k}\!\!-\!\!1}\!\!-\!\!\frac{\sum_{i\neq k}{p_i\left|\mathbf{w}_{e,k}^H\mathbf{g}_i\right|^2}\!\!+\!\!\sigma_e^2\Vert\mathbf{w}_{e,k}\Vert^2}{p_k\left|\mathbf{w}_{e,k}^H\mathbf{g}_k\right|^2}, \forall k, \nonumber\\
   & \mathrm{C7}:x_m^C=y_m^C,\,x_m^C\in \{0,1\},\forall m, \nonumber\\
   & \mathrm{C8}:\mathbf{B}_k= \overline{\mathbf{B}}_k, \forall k.
\end{align}
The introduction of $\{y_m^C
\}_{m=1}^M$ and $\left\{\overline{\mathbf{B}}_k\right\}_{k=1}^K$ serves to decouple the variables within the constraints, thereby facilitating the subsequent alternating optimization. Then, the augmented Lagrangian problem for (\ref{problem16}) is formulated as
\begin{align}\label{aLP}
   \mathop{\text{maximize}}_{\substack{\left\{x_m^C,y_m^C\right\},\left\{\mathbf{V}_m\succeq\mathbf{0}\right\},\\r,\left\{\tau_k,\chi_k\right\},\mathbf{B}_k,\overline{\mathbf{B}}_k}}\quad&r-P_\rho\left(\{x_m^C,y_m^C\}_{m=1}^M,\left\{ \mathbf{B}_k,\overline{\mathbf{B}}_k\right\}_{k=1}^K\right)\nonumber\\
   \text{subject to}\quad & \mathrm{\overline{C1}},\mathrm{\overline{C2}},\mathrm{\overline{C3}},\mathrm{\overline{C4}},\mathrm{C5},
\end{align}
where $P_\rho\left(\{x_m^C,y_m^C\}_{m=1}^M,\left\{ \mathbf{B}_k,\overline{\mathbf{B}}_k\right\}_{k=1}^K\right)$ is defined in (\ref{al0}), $\mathbf{Z}_{1,k}$, $\mathbf{Z}_{2,k}$, and $z_{3,m}$ denote the dual variables associated with the constraints $\mathrm{C6}$, $\mathrm{C7}$, and $\mathrm{C8}$, respectively, and $\rho$ is the penalty parameter.
\begin{figure*}
\begin{align}\label{al0}
    P_\rho\left(\{x_m^C,y_m^C\}_{m=1}^M,\left\{ \mathbf{B}_k,\overline{\mathbf{B}}_k\right\}_{k=1}^K\right)\triangleq& \frac{1}{2\rho}\Bigg(\sum_{k=1}^K  \bigg \Vert\mathbf{B}_k-\sigma_n^2\mathbf{I}-\mathbf{S}_c\left(\sum_{i\neq k}^{K}{p_i\mathbf{h}_i\mathbf{h}_i^H}+\sum_{m=1}^{M}{\mathbf{H}_m{\overline{\mathbf{V}}}_m\mathbf{H}_m^H}\right)\mathbf{S}_c+\rho\mathbf{Z}_{1,k}\bigg\Vert_F^2 \Bigg.\nonumber \\
    &\Bigg. +\sum_{k=1}^K\left\Vert\mathbf{B}_k-\overline{\mathbf{B}}_k+\rho\mathbf{Z}_{2,k}\right\Vert_F^2+\sum_{m=1}^M\left(x_m^C-y_m^C+\rho z_{3,m}\right)^2\Bigg)
\end{align}
\hrulefill
\end{figure*}

\subsection{BCD Method for Solving Problem (\ref{aLP})}
Next, we employ the BCD method to solve problem (\ref{aLP}) in the inner iterations, while updating the dual variables and penalty parameter in the outer iterations. Specifically, the variables are divided into three blocks, and the corresponding update procedures in the BCD iterations are detailed as follows.

\subsubsection{Updating $\{y_m^C\}$, $\left\{\overline{\mathbf{V}}_m\succeq\mathbf{0}\right\}$, $r$ and $\left\{\chi_k\right\}$}
With the other variables fixed, the problem with respect to $\{y_m^C\}$, $\left\{\mathbf{V}_m\succeq\mathbf{0}\right\}$, $r$ and $\left\{\chi_k\right\}$ reduces to
\begin{align}\label{aLP1}
   \mathop{\text{maximize}}_{\left\{x_m^C\right\},\left\{\overline{\mathbf{V}}_m\succeq\mathbf{0}\right\},r,\left\{\chi_k\right\}}\quad&r-P_\rho\left(\{x_m^C,y_m^C\}_{m=1}^M,\left\{ \mathbf{B}_k,\overline{\mathbf{B}}_k\right\}_{k=1}^K\right)\nonumber\\
   \text{subject to}\quad & \mathrm{\overline{C1}},\mathrm{\overline{C2}},\mathrm{\overline{C4}},\mathrm{C5}.
\end{align}
Although the objective function is now convex, the constraints $\overline{\mathrm{C1}}$ and $\overline{\mathrm{C4}}$ remain nonconvex and need additional processing. For constraint $\overline{\mathrm{C1}}$, we first approximate $\delta\left(\mathrm{Tr}({\overline{\mathbf{V}}}_m)\right)$ by a concave function, i.e.,
\begin{align}
    \delta\left(\mathrm{Tr}({\overline{\mathbf{V}}}_m)\right)\approx \frac{2}{\pi}\mathrm{arctan}\left(\frac{\mathrm{Tr}({\overline{\mathbf{V}}}_m}{\varpi}\right)\triangleq f_\varpi (\mathrm{Tr}({\overline{\mathbf{V}}}_m)),
\end{align}
where $\varpi> 0$ is a parameter controlling the approximation accuracy, with smaller values yielding higher accuracy.

Now, we are ready to adopt the SCA technique to handle the nonconvex constraints $\overline{\mathrm{C1}}$ and $\overline{\mathrm{C4}}$. In specific, we consider the following upper bound for the concave parts:
\begin{align}\label{eq21}
    &f_\varpi (\mathrm{Tr}({\overline{\mathbf{V}}}_m))\leq f_\varpi \left(\mathrm{Tr}\left({\overline{\mathbf{V}}}_m^{(t)}\right)\right)\nonumber \\
    &\quad \quad \quad \quad \quad \quad \quad +\nabla f_\varpi \left(\mathrm{Tr}\left({\overline{\mathbf{V}}}_m^{(t)}\right)\right) \left( \mathrm{Tr}\left({\overline{\mathbf{V}}}_m-{\overline{\mathbf{V}}}_m^{(t)}\right)\right),\nonumber\\
    &\frac{1}{2^{\chi_k}-1}\leq \frac{1}{2^{\chi_k^{(t)}}-1}-\frac{\mathrm{ln}2 \cdot 2^{\chi_k^{(t)}}}{\left(2^{\chi_k^{(t)}}-1\right)^2} \left(\chi_k -\chi_k^{(t)}\right),
\end{align}
where ${\overline{\mathbf{V}}}_m^{(t)}$ and $\chi_k^{(t)}$ are the solutions in the $t$-th iteration. Based on the convex approximations in (\ref{eq21}), we replace the concave parts in  $\overline{\mathrm{C1}}$ and $\overline{\mathrm{C4}}$ and the new constraints are denoted by  $\widetilde{\mathrm{C1}}$ and $\widetilde{\mathrm{C4}}$. In the $(t+1)$-th iteration, we consider the following convex surrogate problem
\begin{align}\label{eq22}
   \mathop{\text{maximize}}_{\left\{x_m^C\right\},\left\{\mathbf{V}_m\succeq\mathbf{0}\right\},r,\left\{\chi_k\right\}}\quad&r-P_\rho\left(\{x_m^C,y_m^C\}_{m=1}^M,\left\{ \mathbf{B}_k,\overline{\mathbf{B}}_k\right\}_{k=1}^K\right)\nonumber\\
   \text{subject to}\quad & \mathrm{\widetilde{C1}},\mathrm{\overline{C2}},\mathrm{\widetilde{C4}},\mathrm{C5},
\end{align}
which can be efficiently solved by existing numerical convex program solvers \cite{cvx}. After solving (\ref{eq22}), we update ${\overline{\mathbf{V}}}_m^{(t)}$ and $\chi_k^{(t)}$ using the obtained optimal solutions, and then proceed to the next iteration. It is worth noting that the iterative procedure is guaranteed to converge to a Karush–Kuhn–Tucker (KKT) point of problem (\ref{aLP1}) \cite{beck2010sequential,yaojsac}.

\subsubsection{Updating $\mathbf{B}_k$}
The subproblem for optimizing $\mathbf{B}_k$ is given as (\ref{aLP2}),
\begin{figure*}
\begin{align}\label{aLP2}
   \mathop{\text{minimize}}_{\mathbf{B}_k}\quad&\bigg\Vert\mathbf{B}_k-\sigma_n^2\mathbf{I}-\mathbf{S}_c\left(\sum_{i\neq k}^{K}{p_i\mathbf{h}_i\mathbf{h}_i^H}+\sum_{m=1}^{M}{\mathbf{H}_m{\overline{\mathbf{V}}}_m\mathbf{H}_m^H}\right)\mathbf{S}_c+\rho\mathbf{Z}_{1,k}\bigg\Vert_F^2+\big\Vert\mathbf{B}_k-\overline{\mathbf{B}}_k+\rho\mathbf{Z}_{2,k}\big\Vert_F^2
\end{align}
\hrulefill
\end{figure*}
which is convex and we can directly derive its optimal solution. Specifically, it follows
\begin{align}\label{eq24}
   \mathbf{B}_k=&\frac{1}{2}\Bigg(\sigma_n^2\mathbf{I}+\mathbf{S}_c\left(\sum_{i\neq k}^{K}{p_i\mathbf{h}_i\mathbf{h}_i^H}+\sum_{m=1}^{M}{\mathbf{H}_m{\overline{\mathbf{V}}}_m\mathbf{H}_m^H}\right)\mathbf{S}_c\Bigg.\nonumber\\
   &\Bigg.-\rho\mathbf{Z}_{1,k}+\overline{\mathbf{B}}_k-\rho\mathbf{Z}_{2,k}\Bigg).
\end{align}

\subsubsection{Updating $\left\{x_m^C\right\}$, $r$, $\left\{\tau_k\right\}$, and $\overline{\mathbf{B}}_k$}
Fixing the other variables, this subproblem is given as (\ref{aLP3}).
\begin{figure*}
\begin{align}\label{aLP3}
   \mathop{\text{maximize}}_{\substack{\left\{x_m^C\right\},r,\left\{\tau_k\right\},\overline{\mathbf{B}}_k}}\quad&r-\frac{1}{2\rho}\Bigg(\sum_{k=1}^K \bigg\Vert\mathbf{B}_k-\sigma_n^2\mathbf{I}-\mathbf{S}_c\left(\sum_{i\neq k}^{K}{p_i\mathbf{h}_i\mathbf{h}_i^H}+\sum_{m=1}^{M}{\mathbf{H}_m{\overline{\mathbf{V}}}_m\mathbf{H}_m^H}\right)\mathbf{S}_c+\rho\mathbf{Z}_{1,k}\bigg\Vert_F^2\Bigg.\nonumber\\
   &\ \ \ \ \ \ \ \ \ \ \ \ \ \ \ +\sum_{k=1}^K\Bigg.\big\Vert\mathbf{B}_k-\overline{\mathbf{B}}_k+\rho\mathbf{Z}_{2,k}\big\Vert_F^2+\sum_{m=1}^M\left(x_m^C-y_m^C+\rho z_{3,m}\right)^2\Bigg)\nonumber\\
   \text{subject to}\quad & \mathrm{\overline{C3}},\mathrm{C5}.
\end{align}
\hrulefill
\end{figure*}
Considering that the objective involves the fourth order term of $x_m^C$, we first simplify it as follows:
\begin{align}
    &\sum_{k=1}^{K} \Vert\mathbf{S}_c\mathbf{Q}_k\mathbf{S}_c\!\!-\!\!\mathbf{C}_k\Vert_F^2\nonumber \\
    & = \!\!\sum_{k=1}^{K} \!\mathrm{Tr}\!\left(\mathbf{S}_c\mathbf{Q}_k\mathbf{S}_c\mathbf{S}_c\mathbf{Q}_k\mathbf{S}_c\right)\!-\!2\mathfrak{R}\!\left\{\!\mathrm{Tr}\!\left(\mathbf{S}_c\mathbf{Q}_k\mathbf{S}_c\mathbf{C}_k^H\right)\!\right\}\!+\!\Vert\mathbf{C}_k\Vert_F^2\nonumber \\
    &=\!\sum_{k=1}^{K}{\mathrm{Tr}\left(\mathbf{S}_c\mathbf{Q}_k\mathbf{S}_c\mathbf{Q}_k\right)-2\mathfrak{R}\left\{\mathrm{Tr}\left(\mathbf{S}_c\mathbf{Q}_k\mathbf{S}_c\mathbf{C}_k^H\right)\right\}}\nonumber\\
    &=\!\sum_{k=1}^{K}\mathfrak{R}\left\{\mathrm{Tr}\left(\mathbf{S}_c\mathbf{Q}_k\mathbf{S}_c(\mathbf{Q}_k-2\mathbf{C}_k^H\right)\right\},
\end{align}
where $\mathbf{Q}_k=\sum_{i\neq k}^{K}{p_i\mathbf{h}_i\mathbf{h}_i^H}+\sum_{m=1}^{M}{\mathbf{H}_m{\overline{\mathbf{V}}}_m\mathbf{H}_m^H}$, $\mathbf{C}_k=\mathbf{B}_k-\sigma_n^2\mathbf{I}+\rho\mathbf{Z}_{1,k}$, and the equality exploits the fact that $\mathbf{S}_c^2=\mathbf{S}_c$. Then, by applying \cite[Eq. (1.10.6)]{xdzhang}, we have 
\begin{align} \label{eq27}
    &\sum_{k=1}^{K}\mathfrak{R}\left\{\mathrm{Tr}\left(\mathbf{S}_c\mathbf{Q}_k\mathbf{S}_c(\mathbf{Q}_k-2\mathbf{C}_k^H\right)\right\}\nonumber\\
    &\quad=\mathbf{x}^H\left(\sum_{k=1}^{K}{\mathbf{Q}_k\odot\left(\mathbf{Q}_k-2\mathbf{C}_k^H\right)}\right)\mathbf{x},
\end{align}
where $\mathbf{x}\triangleq \left[x_1^C\mathbf{1};\cdots;x_M^C\mathbf{1}\right]$. Since the positive definiteness of $\sum_{k=1}^{K}{\mathbf{Q}_k\odot\left(\mathbf{Q}_k-2\mathbf{C}_k^H\right)}$ cannot be determined, it is unable to characterize the convexity or concavity of (\ref{eq27}). To handle this issue, we propose an MM-based method. According to \cite[Eq. (26)]{mm}, we employ the following upper bound to the term in (\ref{eq27}):
\begin{align}
    &\mathbf{x}^H\left(\sum_{k=1}^{K}{\mathbf{Q}_k\odot\left(\mathbf{Q}_k-2\mathbf{C}_k^H\right)}\right)\mathbf{x} \leq \mathbf{x}^H\mathbf{M}\mathbf{x}\nonumber\\
    &+2\Re\left\{\mathbf{x}^H \left(\left(\sum_{k=1}^{K}{\mathbf{Q}_k\odot\left(\mathbf{Q}_k-2\mathbf{C}_k^H\right)}\right)-\mathbf{M}\right)\mathbf{x}^{(t)}\right\}\nonumber \\
    &+\!\!\left(\mathbf{x}^{(t)}\right)^H \!\!\!\left(\!\mathbf{M}\!-\!\!\left(\sum_{k=1}^{K}{\mathbf{Q}_k\!\odot\!\left(\mathbf{Q}_k\!-\!2\mathbf{C}_k^H\right)}\!\!\right)\!\!\right)\mathbf{x}^{(t)}\!\triangleq\! g\!\left(\mathbf{x}|\mathbf{x}^{(t)}\right),
\end{align}
where $\mathbf{x}^{(t)}\!=\!\left[x_1^{C,(t)}\mathbf{1};\cdots;x_M^{C,(t)}\mathbf{1}\right]$ is the optimal solution obtained in the previous iteration, $\sum_{k=1}^{K}{\mathbf{Q}_k\odot\left(\mathbf{Q}_k-2\mathbf{C}_k^H\right)}=\mathbf{U}\mathbf{S}\mathbf{U}^H$ and $\mathbf{M}=\mathbf{U}\max\{\mathbf{S},\mathbf{0}\}\mathbf{U}^H$. Moreover, we note that $\mathrm{\overline{C3}}$ is still nonconvex with respect to $\overline{\mathbf{B}}_k$. According to \cite[Eq. (14)]{mm}, the LHS can be lower bounded by
\begin{align}
    &p_k\mathbf{h}_k^H\mathbf{S}_c \overline{\mathbf{B}}_k^{-1}\mathbf{S}_c\mathbf{h}_k \geq   2p_k\mathrm{Tr}\left( \mathbf{S}_c^{(t)} \left(\overline{\mathbf{B}}_k^{(t)}\right)^{-1}\mathbf{S}_c\mathbf{h}_k \mathbf{h}_k^H\right)\nonumber \\
    &\quad -\mathrm{Tr}\!\left( \!\! \left(\overline{\mathbf{B}}_k^{(t)}\right)^{-1}\!\!\mathbf{S}_c^{(t)}\mathbf{h}_k \mathbf{h}_k^H\mathbf{S}_c^{(t)}\left(\overline{\mathbf{B}}_k^{(t)}\right)^{-1}\! \overline{\mathbf{B}}_k\right)\nonumber \\
    &\triangleq\! h\left(\overline{\mathbf{B}}_k,\mathbf{S}_c|\overline{\mathbf{B}}_k^{(t)},\mathbf{S}_c^{(t)}\right),
\end{align}
where $\overline{\mathbf{B}}_k^{(t)}$ is the optimal solution to $\overline{\mathbf{B}}_k$ in the $t$-th iteration. Then, regarding the discrete constraint of $x_m^C$, we equivalently transform it into
\begin{align}
    &{\mathrm{C9}}: (x_m^C)^2-x_m^C\leq 1, \forall m,\nonumber \\
    &{\mathrm{C10}}: (x_m^C)^2-x_m^C\geq 1, \forall m.
\end{align}
The nonconvex constraint $\mathrm{C10}$ can be approximated by 
\begin{align}
    \overline{\mathrm{C10}}: 2 x_m^{C,(t)} x_m^C- (x_m^{C,(t)})^2-x_m^C\geq 1, \forall m.
\end{align}

Now, we construct the convex surrogate problem at the $t$-th iteration as
\begin{align}\label{eq32}
   \mathop{\text{maximize}}_{\substack{\left\{x_m^C\right\},r,\left\{\tau_k\right\},\overline{\mathbf{B}}_k}}\enspace&r\!-\!\frac{1}{2\rho}g\!\left(\mathbf{x}|\mathbf{x}^{(t)}\right)\!\!-\!\frac{1}{2\rho}\!\! \Bigg(\!\!\sum_{k=1}^K\! \big\Vert\mathbf{B}_k\!-\!\overline{\mathbf{B}}_k\!+\!\rho\mathbf{Z}_{2,k}\big\Vert_F^2\Bigg. \nonumber \\
   &\Bigg.+\sum_{m=1}^M\left(x_m^C-y_m^C+\rho z_{3,m}\right)^2\Bigg)\nonumber\\
   \text{subject to}\quad & \mathrm{\widetilde{C3}}:1+h\left(\overline{\mathbf{B}}_k,\mathbf{S}_c|\overline{\mathbf{B}}_k^{(t)},\mathbf{S}_c^{(t)}\right) \geq 2^{\tau_k},\forall k\nonumber \\
   &\mathrm{C5}, \overline{\mathrm{C10}}.
\end{align}
The above process is iterated until convergence, which guarantees that the solution ultimately converges to a stationary point of problem (\ref{aLP3}).

Furthermore, in the outer iteration, the dual variables are updated by 
\begin{align}\label{dual}
    &\mathbf{Z}_{1,k} = \mathbf{Z}_{1,k} +\frac{1}{\rho}\left(\mathbf{B}_k -\sigma_n^2 \mathbf{I}-\mathbf{S}_c\mathbf{Q}_k \mathbf{S}_c\right),\forall k,\nonumber \\
    &\mathbf{Z}_{2,k}=\mathbf{Z}_{2,k}+\frac{1}{\rho}\left(\mathbf{B}_k-\overline{\mathbf{B}}_k\right),\forall k, \nonumber \\
    &z_{3,m} =z_{3,m}+\frac{1}{\rho}(x_m^C-y_m^C),\forall k.
\end{align}

The detailed steps of the proposed PDD method for solving problem~(\ref{problem13}) are listed in Algorithm~\ref{alg1}, where $c$ is a constant for decreasing the penalty parameter. According to \cite{pdd1}, the ultimate solution obtained via Algorithm~\ref{alg1} converges to a stationary point of problem (\ref{problem13}). Within the PDD framework, the primal variables are updated via BCD iterations, while the dual variables and penalty parameters are sequentially refined. The dominant computational burden arises from updating $\overline{\mathbf{B}}_k$ and $\overline{\mathbf{V}}_m$, with a complexity on the order of $\mathcal{O}(LK^{3.5}M^{6.5}N_a^{6.5})$, where $L$ denotes the SCA iteration number.

\begin{algorithm}[!t]
\caption{Proposed algorithm for solving (\ref{problem13})} \label{alg1}
\begin{algorithmic}[1]  
\STATE \textbf{Initialize} feasible initial variables, dual variables $\mathbf{Z}_{1,k}^{(0)}$, $\mathbf{Z}_{2,k}^{(0)}$, $z_{3,m}^{(0)}$, penalty parameter $\rho^{(0)}$, constants $c \in(0,1)$ and $\ell=0$.
\REPEAT
\REPEAT
\STATE \textbf{Initialize} $t=0$ and initial variables ${\overline{\mathbf{V}}}_m^{(0)}$ and $\chi_k^{(0)}$.
\REPEAT
\STATE Solving (\ref{eq22}) and update ${\overline{\mathbf{V}}}_m^{(t)}$ and $\chi_k^{(t)}$.
\STATE Set $t=t+1$.
\UNTIL convergence.
\STATE Update $\mathbf{B}_k$, $\forall k$  according to (\ref{eq24}).
\STATE \textbf{Initialize} $t=0$ and initial variables $x_m^{C,(0)}$ and $\overline{\mathbf{B}}_k^{(0)}$.
\REPEAT
\STATE Solving (\ref{eq32}) and update $x_m^{C,(t)}$ and $\overline{\mathbf{B}}_k^{(t)}$.
\STATE Set $t=t+1$.
\UNTIL convergence.
\UNTIL convergence.
\STATE Update the dual variables by (\ref{dual}).
\STATE Update the penalty parameter as $\rho^{(\ell+1)}=c\rho^{(\ell)}$.
\STATE Set $\ell\gets\ell+1$.
\UNTIL convergence.
\end{algorithmic} 
\end{algorithm}

\section{Low-Complexity Algorithm}
To alleviate the computational burden of jointly optimizing AP modes and AN covariance, a sequential low-complexity  algorithm is developed in this section. Specifically, AP modes are first determined via a heuristic rule, followed by a closed-form iterative procedure for AN optimization.
 
\subsection{Heuristic AP Mode Selection}
To reduce the complexity associated with the joint optimization of AP modes and AN covariance matrices, we develop a low-complexity heuristic that determines the AP modes in a sequential manner. The key idea is that, when selecting the $n$-th R-AP, only the unassigned APs are considered as potential T-APs. Consequently, the interference vulnerability of each candidate R-AP is evaluated based on the actual set of APs that may transmit AN at that stage, rather than assuming that all APs can act as jammers simultaneously. This sequential evaluation avoids interference overestimation and leads to more accurate and adaptive mode decisions.

\subsubsection{Metric Definitions}

Let $\mathcal{R}$ denote the set of APs already selected as R-APs, and let $\mathcal{C}$ represent the set of unassigned APs, i.e., $\mathcal{C}=\{1,\ldots,M\}\setminus\mathcal{R}$. For each AP $m\in\mathcal{C}$, the following four metrics are introduced to assess its suitability for reception or jamming.

\paragraph{Reception gain (RG)}
The uplink reception quality of AP $m$ is measured as
\begin{align}
    \mathrm{RG}_m\triangleq\sum_{k=1}^{K} p_k \Vert \mathbf{h}_{m,k} \Vert^2,
\end{align}
capturing its aggregated signal strength from all UAVs.

\paragraph{Vulnerability to potential T-AP interference}
At the current selection step, only APs in $\mathcal{C}\setminus\{m\}$ may operate as T-APs. Thus, the AN leakage at AP $m$ is approximated by
\begin{align}
    \mathrm{VUL}_m(\mathcal{C})
    \triangleq 
    \sum_{j\in\mathcal{C}\setminus\{m\}}P_j\Vert\mathbf{H}_{m,j}\Vert_F^2,
\end{align}
which accurately reflects the interference risk under the sequential mode-selection procedure.

\paragraph{Jamming gain (JG)}
The capability of AP $m$ to jam Eve is captured by
\begin{align}
    \mathrm{JG}_m 
    \triangleq P_m
    \Vert \mathbf{G}_m \Vert_F^2.
\end{align}
A larger value indicates that AP $m$ is better suited to function as a T-AP.

\paragraph{Backfire to existing R-APs}
If AP $m$ becomes a T-AP later, its AN may cause interference to the already selected R-APs as
\begin{align}
    \mathrm{BF}_m(\mathcal{R})
    \triangleq
    \sum_{n \in \mathcal{R}} 
        P_m\Vert \mathbf{H}_{n,m} \Vert_F^2.
\end{align}

\subsubsection{Sequential Priority Score}
Based on the above metrics, the priority score for each candidate $m\in\mathcal{C}$ is defined as
\begin{align}\label{eq:score}
    \mathrm{score}_m(\mathcal{C})
    \triangleq
    \frac{\mathrm{RG}_m}
         {\,\mathrm{VUL}_m(\mathcal{C}) + N_a\sigma_n^2\,}\times\frac{1}{1+\beta\frac{\mathrm{JG}_m}{\mathrm{BF}_m(\mathcal{R})+\epsilon}},
\end{align}
where $\beta\ge 0$ adjusts the preference for preserving strong jammers as T-APs, and $\epsilon>0$ guarantees numerical stability.
The first fraction favors APs with stronger reception capability and lower susceptibility to AN interference, whereas the second fraction penalizes APs that either have high jamming potential (desirable to keep for T-AP roles) or would introduce strong AN leakage to already selected R-APs. 
Similar AP selection strategies have been explored in scalable CF systems \cite{bjornson2020scalable}, supporting the feasibility of heuristic sequential assignment to balance performance and complexity. Our sequential-priority scoring can thus be seen as a natural extension tailored to secrecy-oriented flexible-duplex operation.

\subsubsection{Greedy Assignment Procedure}
Let $N_C = \lfloor \theta M \rfloor$ be the required number of R-APs. The heuristic begins with $\mathcal{R}=\emptyset$ and $\mathcal{C}=\{1,\ldots,M\}$, and iteratively selects R-APs as
\begin{align}
    m^\star = \arg\max_{m\in\mathcal{C}} 
                \mathrm{score}_m(\mathcal{C}).
\end{align}
Then, we update the sets as follows
\begin{align}
    \mathcal{R} \leftarrow \mathcal{R} \cup \{m^\star\},\qquad
    \mathcal{C} \leftarrow \mathcal{C} \setminus \{m^\star\}.
\end{align}

After selecting $N_C$ R-APs, the final AP mode assignment is given by the following piecewise rule
\begin{align}
(x_m^C,x_m^J)=
\begin{cases}
(1,0), & m\in\mathcal{R},\\
(0,1), & m\in\mathcal{C}.
\end{cases}
\end{align}
This heuristic sequential assignment captures the asymmetric roles of R-APs and T-APs in the flexible-duplex CF architecture, and provides an efficient initialization for subsequent AN covariance optimization.



\subsection{AN Optimization with Fixed AP Allocation}
Given a fixed AP allocation, we proceed to optimize the AN covariance matrices. Before that, we update the effective channel vector $\widetilde{\mathbf{h}}_k\in \mathbb{C}^{N N_a\times 1}$ by removing the zero component (i.e., the channel corresponding to the $m$-th AP with $x_m^J=1$) from the original channel vector $\mathbf{h}_k$, where $N\!=\!\sum_{m=1}^M x_m^C$ is the number of R-APs. Similarly, the effective inter-AP channel matrix is updated as $\widetilde{\mathbf{H}}_m\!\in\! \mathbb{C}^{N N_a\!\times\! N_a}$. With the effective channels, we rewrite the AN optimization problem as
\begin{align}\label{ANopt}
   \mathop{\text{maximize}}_{\left\{\mathbf{V}_m\succeq\mathbf{0}\right\}_{m\in\mathcal{A}}}\enspace&{\min_{k}\frac{1\!+\!p_k{\widetilde{\mathbf{h}}}_k^H\left(\mathbf{D}_k\!+\!\sum_{m\in\mathcal{A}}\!{{\widetilde{\mathbf{H}}}_m\mathbf{V}_m{\widetilde{\mathbf{H}}}_m^H}\right)^{\!-1}{\widetilde{\mathbf{h}}}_k}{1+\frac{1}{c_k+\frac{1}{{\widetilde{p}}_k}\sum_{m\in\mathcal{A}}{\mathbf{w}_{e,k}^H\mathbf{G}_m\mathbf{V}_m\mathbf{G}_m^H\mathbf{w}_{e,k}}}}}\nonumber\\
   \text{subject to}\quad & \mathrm{Tr}\left(\mathbf{V}_m\right)\le P_m, \forall m\in\mathcal{A},
\end{align}
where $\mathcal{A}$ denotes the set of T-APs, $\mathbf{D}_k\triangleq \sum_{i\neq k}^{K}{p_i\widetilde{\mathbf{h}}_i\widetilde{\mathbf{h}}_i^H}+\sigma_n^2\mathbf{I}_{N N_a}$, $c_k\triangleq \frac{\sum_{i\neq k} p_i |\mathbf{w}_{e,k}^H \mathbf{g}_i|^2+\sigma_e^2 \Vert \mathbf{w}_{e,k}\Vert^2}{p_k |\mathbf{w}_{e,k}^H \mathbf{g}_k|^2} $, and $\tilde{p}_k \triangleq \frac{1}{p_k |\mathbf{w}_{e,k}^H \mathbf{g}_k|^2} $. This problem is also highly nonconvex. To enable efficient and rapid solving, we continue to employ the PDD framework for variable decoupling and problem decomposition.

To be concrete, we equivalently rewrite problem (\ref{ANopt}) as
\begin{align}\label{eq36}
   \mathop{\text{maximize}}\limits_{\substack{\left\{\!\mathbf{V}_m,{\widetilde{\mathbf{V}}}_m\right\}_{m\in\mathcal{A}},\\
   \left\{\tau_k,\chi_k,u_k,\mathbf{Q}_k\right\}}}&\enspace{\min_{k}{\tau_k-u_k}}\nonumber\\
   \text{subject to}\enspace & \mathrm{Tr}\left({\widetilde{\mathbf{V}}}_m\right)\le x_m^JP_m, \forall m\in\mathcal{A},\nonumber\\
   & {\widetilde{\mathbf{V}}}_m\succcurlyeq\mathbf{0},\forall m\in\mathcal{A},\nonumber\\
   & 1+p_k{\widetilde{\mathbf{h}}}_k^H\mathbf{Q}_k{\widetilde{\mathbf{h}}}_k=2^{\tau_k},\forall k,\nonumber\\
   & c_k\!+\!\frac{1}{{\widetilde{p}}_k}\!\!\sum_{m\in\mathcal{A}}{\mathbf{w}_{e,k}^H\mathbf{G}_m\mathbf{V}_m\mathbf{G}_m^H\mathbf{w}_{e,k}}\!=\!\frac{1}{2^{\chi_k}\!-\!1},\forall k,\nonumber\\
   & \mathbf{Q}_k\left(\mathbf{D}_k\!+\!\sum_{m\in\mathcal{A}}\!{{\widetilde{\mathbf{H}}}_m\mathbf{V}_m{\widetilde{\mathbf{H}}}_m^H}\right)=\mathbf{I},\forall k,\nonumber\\
   & \mathbf{V}_m={\widetilde{\mathbf{V}}}_m,\forall m\in\mathcal{A},\nonumber\\
   & u_k=\chi_k,\forall k,
\end{align}
where $\tau_k,\chi_k,u_k,\mathbf{Q}_k$, and ${\widetilde{\mathbf{V}}}_m$ are newly introduced auxiliary variables. Among them, $\tau_k$ and $\chi_k$ are introduced to decompose the complicated fractional terms,  and $\mathbf{Q}_k$ is introduced to substitute the matrix inverse. In addition, the introduction of $\widetilde{\mathbf{V}}_m$ and $u_k$ serves as auxiliary replicas of $\mathbf{V}_m$ and $\chi_k$, which facilitates deriving closed-form optimal solutions for the corresponding subproblems. Next, the augmented Lagrange problem can be formulated as
\begin{align}
   \mathop{\text{maximize}}_{\mathcal{X}}&\enspace \min_k{\ \tau_k-\chi_k}-P_\rho(\mathcal{X})\nonumber\\
   \text{subject to}\enspace & \mathrm{Tr}\left({\widetilde{\mathbf{V}}}_m\right)\le x_m^JP_m, \forall m\in\mathcal{A},\nonumber\\
   & {\widetilde{\mathbf{V}}}_m\succcurlyeq\mathbf{0},\forall m\in\mathcal{A},
\end{align}
where $\mathcal{X}\triangleq\left\{ \left\{\!\mathbf{V}_m,{\widetilde{\mathbf{V}}}_m\right\}_{m\in\mathcal{A}},
   \left\{\tau_k,\chi_k,u_k,\mathbf{Q}_k\right\}_{k=1}^K \right\}$ and
\begin{align}
    P_\rho (\mathcal{X})\!\triangleq \! &\frac{1}{2\rho}\sum_{k=1}^{K}\left(\left(1+p_k{\widetilde{\mathbf{h}}}_k^H\mathbf{Q}_k{\widetilde{\mathbf{h}}}_k-2^{\tau_k}+\rho z_{1,k}\right)^2\right.\nonumber\\
    &+\!\!\!\left.\left( \!\!c_k\!+\!\frac{\sum_{m\in\mathcal{A}}\!\mathbf{w}_{e,k}^H\mathbf{G}_m\!\!\mathbf{V}_m\!\mathbf{G}_m^H\mathbf{w}_{e,k}}{{\widetilde{p}}_k}\!\!-\!\!\frac{1}{2^{\chi_k}\!\!-\!\!1}\!\!+\!\!\rho z_{2,k}\!\!\right)^2\right.\nonumber\\
   &+\left.\left \Vert\mathbf{Q}_k\left(\mathbf{D}_k\!+\!\sum_{m\in\mathcal{A}}\!{{\widetilde{\mathbf{H}}}_m\mathbf{V}_m{\widetilde{\mathbf{H}}}_m^H}\right)-\mathbf{I}+\rho\mathbf{Z}_{3,k}\right \Vert_F^2\right.\nonumber\\
   &+\left.\left(u_k-\chi_k+\rho z_{5,k}\right)^2\right)\nonumber\\
   &+\frac{1}{2\rho}\sum_{m=1}^{M}\Vert\mathbf{V}_m-{\widetilde{\mathbf{V}}}_m+\rho\mathbf{Z}_{4,m}\Vert_F^2.
\end{align}
Then, BCD method is adopted for updating variables in the inner iterations. 

\subsubsection{Updating $\mathbf{Q}_k$} The subproblem regarding $\mathbf{Q}_k$ is 
\begin{align}
    &\mathop{\text{minimize}}_{\mathbf{Q}_k} \quad\left(1+p_k{\widetilde{\mathbf{h}}}_k^H\mathbf{Q}_k{\widetilde{\mathbf{h}}}_k-2^{\tau_k}+\rho z_{1,k}\right)^2\nonumber\\
    &\quad +\left \Vert\mathbf{Q}_k\left(\mathbf{D}_k\!+\!\sum_{m\in\mathcal{A}}\!{{\widetilde{\mathbf{H}}}_m\mathbf{V}_m{\widetilde{\mathbf{H}}}_m^H}\right)-\mathbf{I}+\rho\mathbf{Z}_{3,k}\right \Vert_F^2,
\end{align}
which is a unconstrained convex problem. To further reformulate the objective, we vectorize the matrix to obtain
\begin{align}
&{\widetilde{\mathbf{h}}}_k^H\mathbf{Q}_k{\widetilde{\mathbf{h}}}_k\!=\!\left(\!\mathrm{vec}\!\left({\widetilde{\mathbf{h}}}_k{\widetilde{\mathbf{h}}}_k^H\!\right)\!\right)^T\!\!\mathrm{vec}\left(\mathbf{Q}_k\right)\!=\!\mathbf{a}_k^H\mathbf{q}_k,\nonumber \\
&\left \Vert\mathbf{Q}_k\left(\mathbf{D}_k\!+\!\sum_{m\in\mathcal{A}}\!{{\widetilde{\mathbf{H}}}_m\mathbf{V}_m{\widetilde{\mathbf{H}}}_m^H}\right)-\mathbf{I}+\rho\mathbf{Z}_{3,k}\right \Vert_F^2\nonumber\\
&=\Vert\mathbf{B}_k\mathbf{q}_k\!-\!\mathbf{b}_k\Vert^2,
\end{align}
where $\mathbf{a}_k\triangleq \left(\!\mathrm{vec}\!\left({\widetilde{\mathbf{h}}}_k{\widetilde{\mathbf{h}}}_k^H\!\right)\!\right)^T$, $\mathbf{q}_k \triangleq \mathrm{vec}(\mathbf{Q}_k)$, $\mathbf{B}_k\triangleq \left(\mathbf{D}_k\!+\!\sum_{m\in\mathcal{A}}\!{{\widetilde{\mathbf{H}}}_m\mathbf{V}_m{\widetilde{\mathbf{H}}}_m^H}\right)^T\otimes \mathbf{I}$ and $\mathbf{b}_k \triangleq \mathrm{vec}\left(\mathbf{I}\!-\!\rho\mathbf{Z}_{3,k}\right)$. Now,
the objective function is equivalent to
\begin{align}
&\Big(p_k{\widetilde{\mathbf{h}}}_k^H\mathbf{Q}_k{\widetilde{\mathbf{h}}}_k-\underbrace{(2^{\tau_k}-1-\rho z_{1,k})}_{\triangleq a_k}\Big)^2+\Vert\mathbf{B}_k\mathbf{q}_k\!-\!\mathbf{b}_k\Vert^2\nonumber\\
&\!\Leftrightarrow \!\mathbf{q}_k^H\!\left(p_k^2\mathbf{a}_k\mathbf{a}_k^H\!\!+\!\mathbf{B}_k^H\mathbf{B}_k\right)\mathbf{q}_k\!-\!2\Re\!\left\{\!\left(a_kp_k\mathbf{a}_k\!+\!\!\mathbf{B}_k^H\mathbf{b}_k\right)^H\! \!\mathbf{q}_k\!\right\}.
\end{align}
Hence, the optimal solution is derived as
\begin{align}\label{eq43}
\mathbf{Q}_k^\ast=\mathrm{mat}\!\left(\!\left(p_k^2\mathbf{a}_k\mathbf{a}_k^H\!\!+\!\!\mathbf{B}_k^H\mathbf{B}_k\right)^{\dag}\left(a_kp_k\mathbf{a}_k\!\!+\!\!\mathbf{B}_k^H\mathbf{b}_k\right)\!\right).
\end{align}

\subsubsection{Updating $\mathbf{V}_m$}  Next, we focus our attention on optimizing $\mathbf{V}_m $. The corresponding subproblem is formulated as
\begin{align}
    &\mathop{\text{minimize}}_{\mathbf{V}_m}\enspace\sum_{k=1}^{K}\!\left(\Vert\mathbf{Q}_k\left(\mathbf{D}_k\!+\!\sum_{m\in\mathcal{A}}\!{{\widetilde{\mathbf{H}}}_m\mathbf{V}_m{\widetilde{\mathbf{H}}}_m^H}\right)\!-\!\mathbf{I}\!+\!\rho\mathbf{Z}_{3,k}\Vert^2\right. \nonumber \\
    &+\!\left.\left(\!c_k\!+\!\frac{1}{{\widetilde{p}}_k}\sum_{m\in\mathcal{A}}{\mathbf{w}_{e,k}^H\mathbf{G}_m\mathbf{V}_m\mathbf{G}_m^H\mathbf{w}_{e,k}}\!-\!\frac{1}{2^{\chi_k}\!-\!1}\!+\!\rho z_{2,k}\!\right)^2\right)\nonumber\\
    &+\Vert\mathbf{V}_m-{\widetilde{\mathbf{V}}}_m+\rho\mathbf{Z}_{4,m}\Vert^2.
\end{align}
Similar to the updating procedure of $\mathbf{Q}_k$, we perform the following reformulations:
\begin{align}
    &\left(c_k\!+\!\frac{1}{{\widetilde{p}}_k}\sum_{m\in\mathcal{A}}{\mathbf{w}_{e,k}^H\mathbf{G}_m\mathbf{V}_m\mathbf{G}_m^H\mathbf{w}_{e,k}}\!-\!\frac{1}{2^{\chi_k}-1}\!+\!\rho z_{2,k}\right)^2\nonumber\\
    =&\Bigg(\mathrm{Tr}\Bigg(\mathbf{V}_m\underbrace{\frac{1}{{\widetilde{p}}_k}\mathbf{G}_m^H\mathbf{w}_{e,k}\mathbf{w}_{e,k}^H\mathbf{G}_m}_{\triangleq\mathbf{\Upsilon}_{m,k}}\Bigg)+\Bigg.\nonumber\\
    &\Bigg.\underbrace{c_k\!+\!\frac{1}{{\widetilde{p}}_k}\!\sum_{m^\prime\neq m}{\mathbf{w}_{e,k}^H\mathbf{G}_{m^\prime}\mathbf{V}_{m^\prime}\mathbf{G}_{m^\prime}^H\mathbf{w}_{e,k}}\!-\!\frac{1}{2^{\chi_k}\!-\!1}\!+\!\rho z_{2,k}}_{\triangleq \zeta_{m,k}}\Bigg)^2\nonumber\\
    =&\left(\left(\mathrm{vec}\left(\mathbf{\Upsilon}_{m,k}\right)\right)^T\mathrm{vec}\left(\mathbf{V}_m\right)+\zeta_{m,k}\right)^2\nonumber\\
    =&\mathbf{v}_m^H\bm{\upsilon}_{m,k}\bm{\upsilon}_{m,k}^H\mathbf{v}_m+2\zeta_{m,k}\Re\left\{\bm{\upsilon}_{m,k}^H\mathbf{v}_m\right\}+\zeta_{m,k}^2, \\
    &\Big\Vert\sum_{m\in\mathcal{A}}{\mathbf{Q}_k{\widetilde{\mathbf{H}}}_m\mathbf{V}_m{\widetilde{\mathbf{H}}}_m^H}+\mathbf{Q}_k\mathbf{D}_k-\mathbf{I}+\rho\mathbf{Z}_{3,k} \Big\Vert^2\nonumber\\
    =&\Big\Vert\underbrace{\left({\widetilde{\mathbf{H}}}_m^\ast\otimes\mathbf{Q}_k{\widetilde{\mathbf{H}}}_m\right)}_{\triangleq \mathbf{\Gamma}_{m,k}}\mathbf{v}_m\nonumber\\
    &+\underbrace{\mathrm{vec}\Big(\sum_{m^\prime\neq m}{\mathbf{Q}_k{\widetilde{\mathbf{H}}}_{m^\prime}\mathbf{V}_{m^\prime}{\widetilde{\mathbf{H}}}_{m^\prime}^H}\!+\!\mathbf{Q}_k\mathbf{D}_k\!-\!\mathbf{I}\!+\!\rho\mathbf{Z}_{3,k}\Big)}_{\triangleq \bm{\theta}_{m,k}}\Big\Vert^2\nonumber\\
    =&\mathbf{v}_m^H\mathbf{\Gamma}_{m,k}^H\mathbf{\Gamma}_{m,k}\mathbf{v}_m\!+\!2\Re\!\left\{\bm{\theta}_{m,k}^H\mathbf{\Gamma}_{m,k}\mathbf{v}_m\right\}\!+\!\Vert\bm{\theta}_{m,k}\Vert^2,\\
   &\Big\Vert\mathbf{V}_m\!-\!{\widetilde{\mathbf{V}}}_m\!+\!\rho\mathbf{Z}_{4,m}\Big\Vert^2\!=\!\Big\Vert\mathbf{v}_m-\underbrace{\mathrm{vec}\left({\widetilde{\mathbf{V}}}_m-\rho\mathbf{Z}_{4,m}\right)}_{\triangleq \bm{\psi}_m}\Big\Vert^2\nonumber\\
   =&\mathbf{v}_m^H\mathbf{v}_m\!-\!2\Re\!\left\{\bm{\psi}_m^H\mathbf{v}_m\right\}\!+\!\Vert\bm{\psi}_m\Vert^2,
\end{align}
where $\mathbf{v}_m \triangleq \mathrm{vec}(\mathbf{V}_m)$ and $\bm{\upsilon}_{m,k} \triangleq \mathrm{vec}(\mathbf{\Upsilon}_{m,k})$. Based on the above transformations,  the optimal solution is derived as
\begin{align}\label{eq48}
\mathbf{V}_m^\ast\!=\!\mathrm{mat}&\left(\!\left(\sum_{k=1}^{K}\left(\bm{\upsilon}_{m,k}\bm{\upsilon}_{m,k}^H+\mathbf{\Gamma}_{m,k}^H\mathbf{\Gamma}_{m,k}\right)+\mathbf{I}\right)^{\dag}\right.\nonumber\\
&\times\!\left.\left(\!\bm{\psi}_m\!-\!\sum_{k=1}^{K}\left(\zeta_{m,k}\bm{\upsilon}_{m,k}\!+\!\mathbf{\Gamma}_{m,k}^H\bm{\theta}_{m,k}\right)\!\right)\!\right).
\end{align}

\subsubsection{Updating ${\widetilde{\mathbf{V}}}_m$}  The subproblem regarding ${\widetilde{\mathbf{V}}}_m$ is expressed as 
\begin{align}\label{BCD3}
   \mathop{\text{minimize}}_{{\widetilde{\mathbf{V}}}_m}\quad&\big\Vert\mathbf{V}_m-{\widetilde{\mathbf{V}}}_m+\rho\mathbf{Z}_{4,m}\big\Vert_F^2\nonumber\\
   \text{subject to}\quad & \mathrm{Tr}\left({\widetilde{\mathbf{V}}}_m\right)\le P_m,  \nonumber\\
   & {\widetilde{\mathbf{V}}}_m\succcurlyeq\mathbf{0}, 
\end{align}
which is also a convex problem. We derive its optimal solution in the following proposition.

\begin{proposition} \label{proposition2}
  The optimal solution to ${\widetilde{\mathbf{V}}}_m$ is equal to
\begin{align}\label{eq50}
{\widetilde{\mathbf{V}}}_m^\ast=\mathbf{U}_m\mathrm{diag}\left(x_{m,1},\cdots,x_{m,N_a}\right)\mathbf{U}_m^H,
\end{align}
where $\mathbf{U}_m$ is the singular matrix of $\frac{\mathbf{V}_m+\rho\mathbf{Z}_{4,m}+\mathbf{V}_m^H+\rho\mathbf{Z}_{4,m}^H}{2}$. Its optimal eigenvalue is equal to
\begin{align}\label{opteig}
    x_{m,n}^\ast=
    \begin{cases}
    {\bar{y}}_{m,n},&S_m\le P_m,\\
    \max\{{y_{m,n}-\tau,0}\},&S_m>P_m,
    \end{cases}
\end{align}
where $y_{m,n}$ is the eigenvalue of $\frac{\mathbf{V}_m+\rho\mathbf{Z}_{4,m}+\mathbf{V}_m^H+\rho\mathbf{Z}_{4,m}^H}{2}$, ${\bar{y}}_{m,n}\triangleq \max\{{y_{m,n},0}\}$, $S_m\triangleq \sum_{n=1}^{N_a}{\bar{y}}_{m,n}$, and $\tau$ is a positive number satisfying $\sum_{n=1}^{N_a}x_{m,n}^\ast=P_m$.
\end{proposition}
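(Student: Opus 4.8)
The plan is to recognize subproblem (\ref{BCD3}) as a Frobenius-norm projection of a single fixed matrix onto the spectrahedron $\{\widetilde{\mathbf{V}}_m\succeq\mathbf{0},\ \mathrm{Tr}(\widetilde{\mathbf{V}}_m)\le P_m\}$, and then use unitary invariance to collapse it to a scalar water-filling problem over eigenvalues. First I would expand the objective and observe that, because the variable $\widetilde{\mathbf{V}}_m$ is Hermitian, only the Hermitian part of $\mathbf{V}_m+\rho\mathbf{Z}_{4,m}$ affects the cross term. Setting $\mathbf{T}_m\triangleq\tfrac{1}{2}\big((\mathbf{V}_m+\rho\mathbf{Z}_{4,m})+(\mathbf{V}_m+\rho\mathbf{Z}_{4,m})^H\big)$, one has $\Re\{\mathrm{Tr}(\widetilde{\mathbf{V}}_m^H(\mathbf{V}_m+\rho\mathbf{Z}_{4,m}))\}=\mathrm{Tr}(\widetilde{\mathbf{V}}_m\mathbf{T}_m)$, so the objective of (\ref{BCD3}) equals $\Vert\widetilde{\mathbf{V}}_m-\mathbf{T}_m\Vert_F^2$ plus a constant that does not depend on $\widetilde{\mathbf{V}}_m$. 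Hence the problem is exactly the Euclidean projection of $\mathbf{T}_m$ onto the feasible set.

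Next I would diagonalize $\mathbf{T}_m=\mathbf{U}_m\mathrm{diag}(y_{m,1},\dots,y_{m,N_a})\mathbf{U}_m^H$ and show the minimizer shares the eigenbasis $\mathbf{U}_m$. For any feasible $\widetilde{\mathbf{V}}_m$ with (nonnegative) eigenvalues $\{\lambda_n\}$, writing $\Vert\widetilde{\mathbf{V}}_m-\mathbf{T}_m\Vert_F^2=\sum_n\lambda_n^2+\sum_n y_{m,n}^2-2\,\mathrm{Tr}(\widetilde{\mathbf{V}}_m\mathbf{T}_m)$ and invoking von Neumann's trace inequality for Hermitian matrices, $\mathrm{Tr}(\widetilde{\mathbf{V}}_m\mathbf{T}_m)\le\sum_n\lambda_{(n)}y_{m,(n)}$ with both spectra sorted in the same order, yields the lower bound $\sum_n(\lambda_{(n)}-y_{m,(n)})^2$, which is attained by $\widetilde{\mathbf{V}}_m=\mathbf{U}_m\mathrm{diag}(\lambda_1,\dots,\lambda_{N_a})\mathbf{U}_m^H$. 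Since the feasible set is permutation-invariant in the eigenvalues, the matrix problem reduces to the scalar problem $\min\ \sum_n(x_{m,n}-y_{m,n})^2$ subject to $x_{m,n}\ge0$ and $\sum_n x_{m,n}\le P_m$, which establishes the structure in (\ref{eq50}).

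Finally I would solve the scalar problem by KKT. If $S_m=\sum_n\bar y_{m,n}\le P_m$ with $\bar y_{m,n}=\max\{y_{m,n},0\}$, the coordinatewise projection onto the nonnegative orthant, $x_{m,n}=\bar y_{m,n}$, already satisfies the trace budget and is therefore optimal, giving the first branch of (\ref{opteig}). If $S_m>P_m$, the trace constraint is active; attaching a multiplier $\tau\ge0$ to $\sum_n x_{m,n}=P_m$ and nonnegativity multipliers to $x_{m,n}\ge0$, stationarity and complementary slackness give $x_{m,n}^\ast=\max\{y_{m,n}-\tau,0\}$, and I would note that $\phi(\tau)\triangleq\sum_n\max\{y_{m,n}-\tau,0\}$ is continuous and strictly decreasing where positive, with $\phi(0)=S_m>P_m$ and $\phi(\tau)\to0$, so there is a unique $\tau>0$ with $\phi(\tau)=P_m$ — the second branch. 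The main obstacle is the spectral-reduction step: making the "shared eigenvectors" claim rigorous requires careful handling of the sorted-order matching in von Neumann's inequality and of the fact that $\mathbf{T}_m$ may be indefinite (so some $y_{m,n}$ can be negative); the subsequent scalar water-filling is routine.
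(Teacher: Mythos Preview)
Your proposal is correct and follows essentially the same route as the paper: reduce to projecting the Hermitian part $\mathbf{T}_m=\mathbf{A}_m$ of $\mathbf{V}_m+\rho\mathbf{Z}_{4,m}$ onto the spectrahedron, diagonalize, and solve the resulting scalar simplex-type projection. If anything, you are more explicit than the paper, which invokes only ``unitary invariance of the Frobenius norm'' for the spectral reduction and declares the eigenvalue problem ``quite straightforward''; your use of von Neumann's trace inequality and the KKT water-filling argument fills in exactly those two gaps.
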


\begin{proof}
    Please refer to Appendix \ref{app3}. \hfill $\square$
\end{proof}

 \subsubsection{Updating $\{\tau_k,u_k\}_{k=1}^K$}
 We simplify the optimization problem with respect to $\{\tau_k,u_k\}_{k=1}^K$ as 
\begin{align}
   \mathop{\text{maximize}}\limits_{\{\tau_k,u_k\}}\quad\min_{k}&\quad{\tau_k-u_k-\frac{1}{2\rho}\times}\nonumber\\
   &\sum_{k=1}^{K}\!\!\bigg(\!\!\left(2^{\tau_k}\!\!-\!\!\alpha_k\right)^2\!\!+\!\left(u_k\!-\!\chi_k\!+\!\rho z_{5,k}\right)^2\!\bigg),
\end{align}
where $\alpha_k \triangleq \left(1\!+\!p_k{\widetilde{\mathbf{h}}}_k^H\mathbf{Q}_k{\widetilde{\mathbf{h}}}_k\!+\!\rho z_{1,k}\right)$. It is equivalent to
\begin{align}
   \mathop{\text{maximize}}_{\left\{\tau_k,u_k\right\},r}\quad&r\!-\!\frac{1}{2\rho}\sum_{k=1}^{K}\left(\left(2^{\tau_k}\!-\!\alpha_k\right)^2\!+\!\left(u_k\!-\!\chi_k\!+\!\rho z_{5,k}\right)^2\right)\nonumber\\
   \text{subject to}\quad & r\le\tau_k-u_k, \forall k
\end{align}
and it is a convex problem. By checking the KKT conditions, the optimal solutions admit the following forms:
\begin{align}\label{eq54}
    &\tau_k^\ast =\log_2{\frac{\alpha_k+\sqrt{\alpha_k^2+\frac{4\rho\lambda_k}{\ln{2}}}}{2}},\nonumber\\
    &u_k^\ast=\chi_k-\rho z_{5,k}-\rho\lambda_k,\nonumber\\
    &\sum_{k=1}^{K}\lambda_k=1,\lambda_k\geq0,\nonumber\\
    &\lambda_k\left(r^\ast-\tau_k+u_k\right)=0,
\end{align}
where $\lambda_k$ is the dual variable. 
For the optimal solutions $\left\{\tau_k^\ast, u_k^\ast\right\}$ and $r^\ast$, if $\tau_k^\ast - u_k^\ast > r^\ast$, then $\tau_k^\ast = \log_2{\alpha_k}$, $u_k^\ast = \chi_k - \rho z_{5,k}$. Based on this finding, we assume without loss of generality that $\log_2{\alpha_1} - \chi_1 + \rho z_{5,1}\geq, \cdots,\geq  \log_2{\alpha_K} - \chi_K + \rho z_{5,K}$. Suppose there exists an index $\bar{k}$ such that for any $k \le \bar{k}$, $\tau_k^\ast = \log_2\alpha_k$ and $u_k^\ast = \chi_k - \rho z_{5,k}$; while for any $k > \bar{k}$, $\tau_k^\ast = r^\ast + u_k^\ast$. Moreover, when $k > \bar{k}$, for each given $r$, the corresponding $\lambda_k$ can be uniquely determined. Using the bisection search procedure to ensure $\sum_{k=\bar{k}+1}^{K}\lambda_k=1$, we can obtain the corresponding $r^\ast$ and all other associated variables. By enumerating all possible $\bar{k}$, the solution yielding the largest objective value is selected as the final optimal solution.

 \subsubsection{Updating $\chi_k$} Finally, we consider the optimization over $\chi_k$, which follows
\begin{align}\label{eq55}
    \mathop{\text{minimize}}_{\chi_k}&\ \left(\frac{1}{2^{\chi_k}\!-\!1}\!-\!{\beta_k}\right)^2+\left(u_k-\chi_k+\rho z_{5,k}\right)^2,
\end{align}
and $\beta_k\triangleq c_k\!+\!\frac{\sum_{m\in\mathcal{A}}\!\mathbf{w}_{e,k}^H\mathbf{G}_m\mathbf{V}_m\mathbf{G}_m^H\mathbf{w}_{e,k}}{{\widetilde{p}}_k}\!\!-\!\frac{1}{2^{\chi_k}-1}\!+\!\rho z_{2,k} $. However, for such a single-variable optimization problem, it can be easily verified by taking the derivative that the problem admits at least one minimum point. Therefore, it can be efficiently solved using the Newton method.

After the BCD iterations converge, the dual variables and penalty parameters are updated in the same manner as described in Sec. \ref{SEC3}, and thus the details are omitted here for brevity. In summary, the complete procedure for solving problem (\ref{eq36}) is presented in Algorithm \ref{alg2}. In the BCD iterations, the overall computational cost is mainly dominated by the update of $\mathbf{Q}_{k}$, which involves a matrix inversion with the largest dimension. This step incurs a complexity on the order of $\mathcal{O}(K N ^6 N_a^6)$.

\begin{algorithm}[!t]
\caption{Proposed algorithm for solving (\ref{eq36})} \label{alg2}
\begin{algorithmic}[1]  
\STATE \textbf{Initialize} feasible initial variables, dual variables,  penalty parameter $\rho^{(0)}$, constants $c \in(0,1)$ and $\ell=0$.
\REPEAT
\REPEAT
\STATE Update $\mathbf{Q}_k$, $\forall k$ according to (\ref{eq43}).
\FOR{$m=1,\cdots,M$}
\STATE Update $\mathbf{V}_m$ according to (\ref{eq48}).
\ENDFOR
\STATE Update $\widetilde{\mathbf{V}}_m$, $\forall m$ according to (\ref{eq50}).
\STATE Update $\{\tau_k,u_k\}_{k=1}^K$ according to (\ref{eq54}).
\STATE Update $\chi_k$, $\forall k$ by solving problem (\ref{eq55}).
\UNTIL convergence.
\STATE Update the dual variables.
\STATE Update the penalty parameter as $\rho^{(\ell+1)}=c\rho^{(\ell)}$.
\STATE Set $\ell\gets\ell+1$.
\UNTIL convergence.
\end{algorithmic} 
\end{algorithm}

\begin{figure}[!t]
\centering
\includegraphics[width=3.2in]{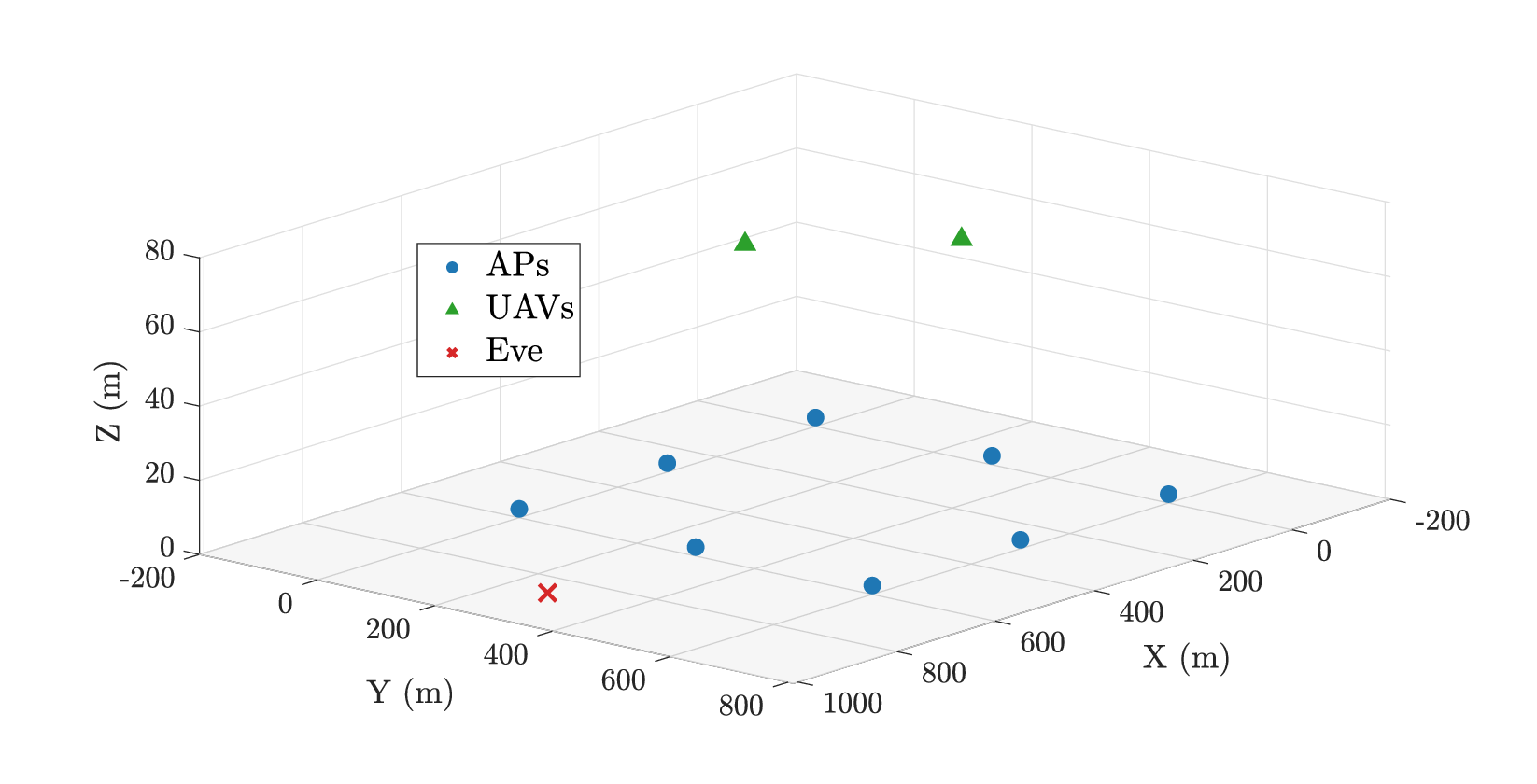}
\caption{Three-dimensional topology of the considered LAWN.}\label{layout}
\end{figure}

\section{Numerical Results}
As illustrated in Fig.~\ref{layout}, we consider a CF-based LAWN in which multiple distributed APs cooperatively serve uplink UAVs in the presence of a multi-antenna eavesdropper. Unless otherwise specified, the simulation parameters are as follows: the network comprises $M=8$ APs, each equipped with $N_a = 4$ antennas and deployed at fixed ground coordinates $(0,0,0)$, $(300,0,0)$, $(600,0,0)$, $(0,300,0)$, $(600,300,0)$, $(0,600,0)$, $(300,600,0)$, and $(600,600,0)$. Two single-antenna UAVs, each transmitting with $p_k = 30~\text{dBm}$, are randomly positioned within a 3D semi-ellipsoidal region centered at $(300,300,0)$ with a horizontal radius of $600~\text{m}$ and a vertical height of $60~\text{m}$. A passive eavesdropper equipped with $N_e = 4$ antennas is located at $(900,300,0)$ and employs an MMSE receiver for signal detection. Each T-AP is constrained by a maximum AN transmit power of $P_m = 30~\text{dBm}$, while the noise powers at both APs and Eve are set to $\sigma_n^2 = \sigma_e^2 = -57~\text{dBm}$. The parameter $\varpi$ is chosen as $\varpi = 0.05$.

\begin{figure}[!t]
\centering
\includegraphics[width=3.2in]{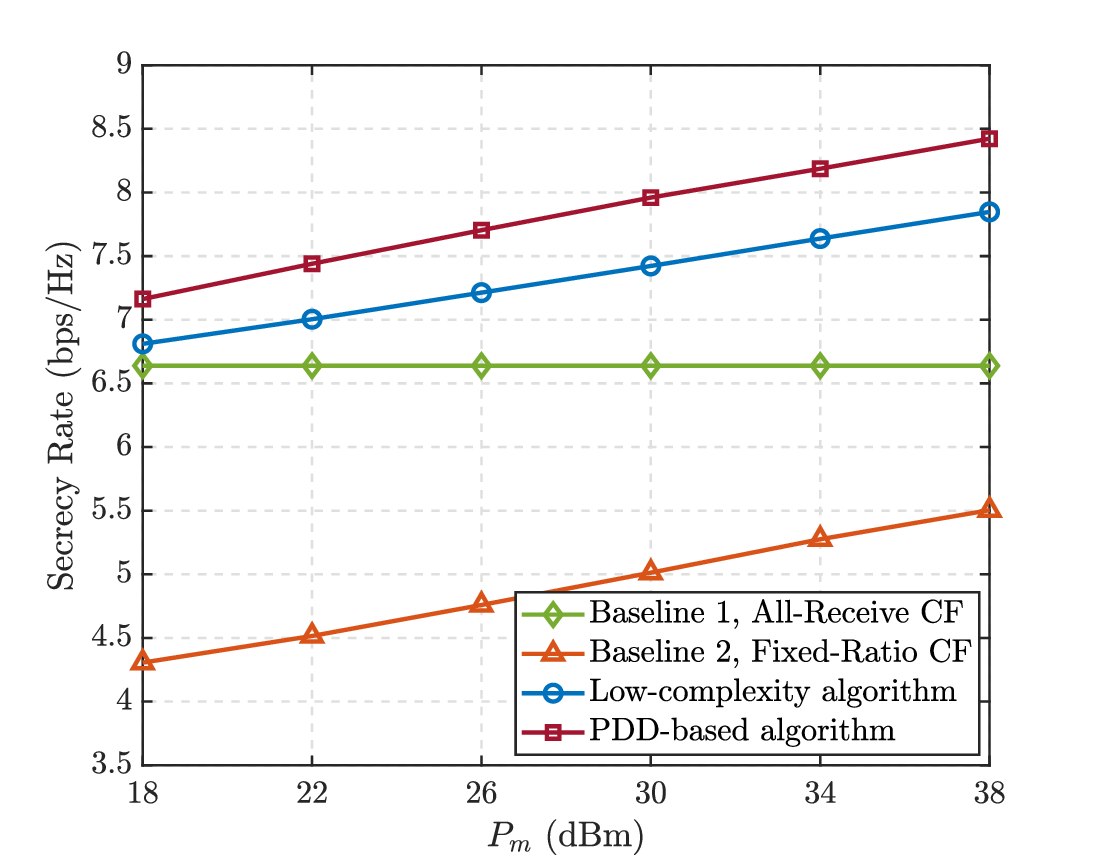}
\caption{Secrecy rate versus AN power budget, $P_m$.}\label{fig3}
\end{figure}

\begin{figure}[!t]
\centering
\includegraphics[width=3.2in]{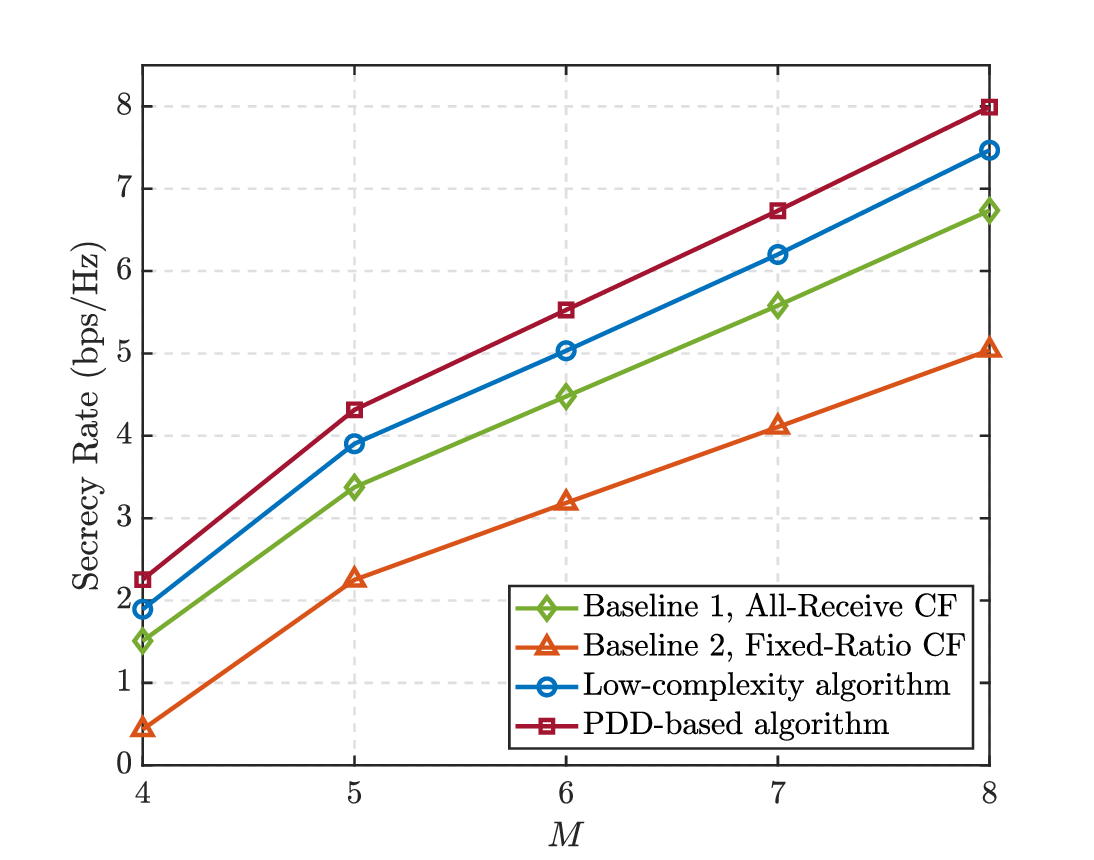}
\caption{Secrecy rate versus the number of APs, $M$.}\label{fig4}
\end{figure}

All air-to-ground and ground-to-ground channels follow a standard mmWave model \cite{lin2017subarray}, where LoS propagation dominates due to the elevated UAV altitude and the sparse scattering characteristics of the environment, while NLoS components are generated according to the sparse multipath structure typical of mmWave channels. All results are averaged over a large number of independent channel realizations.

To benchmark the proposed flexible-duplex scheme, we consider the following two baselines:

\begin{itemize}
\item \textbf{Baseline~1 (All-Receive CF):} All APs operate as R-APs, i.e., $x_m^C = 1$ and $x_m^J = 0$, $\forall m$.
\item \textbf{Baseline~2 (Fixed-Ratio CF):} Half of the APs are set as R-APs, while the remaining half operate as T-APs.
\end{itemize}

These baselines allow us to evaluate the performance gains brought by AP-level duplex flexibility and by the joint optimization of mode selection and AN design.

Fig.~\ref{fig3} shows the secrecy rate as the maximum AN power budget $P_m$ increases from $18$ to $38$~dBm. For both proposed methods, the secrecy rate improves steadily with larger $P_m$, since a higher AN budget enables stronger jamming toward Eve without excessively degrading the legitimate uplink after optimization. In contrast, Baseline~1 remains constant because no AN is transmitted, and thus increasing $P_m$ has no effect. Baseline~2 exhibits inferior performance, demonstrating that arbitrarily assigning a fixed portion of APs can lead to severe AN leakage and even harm secrecy performance when mode selection is not jointly optimized. Across the entire range, the proposed joint optimization achieves the highest secrecy rate, while the low-complexity sequential scheme closely follows and maintains consistently large gains over both baselines.

\begin{figure}[!t]
\centering
\includegraphics[width=3.2in]{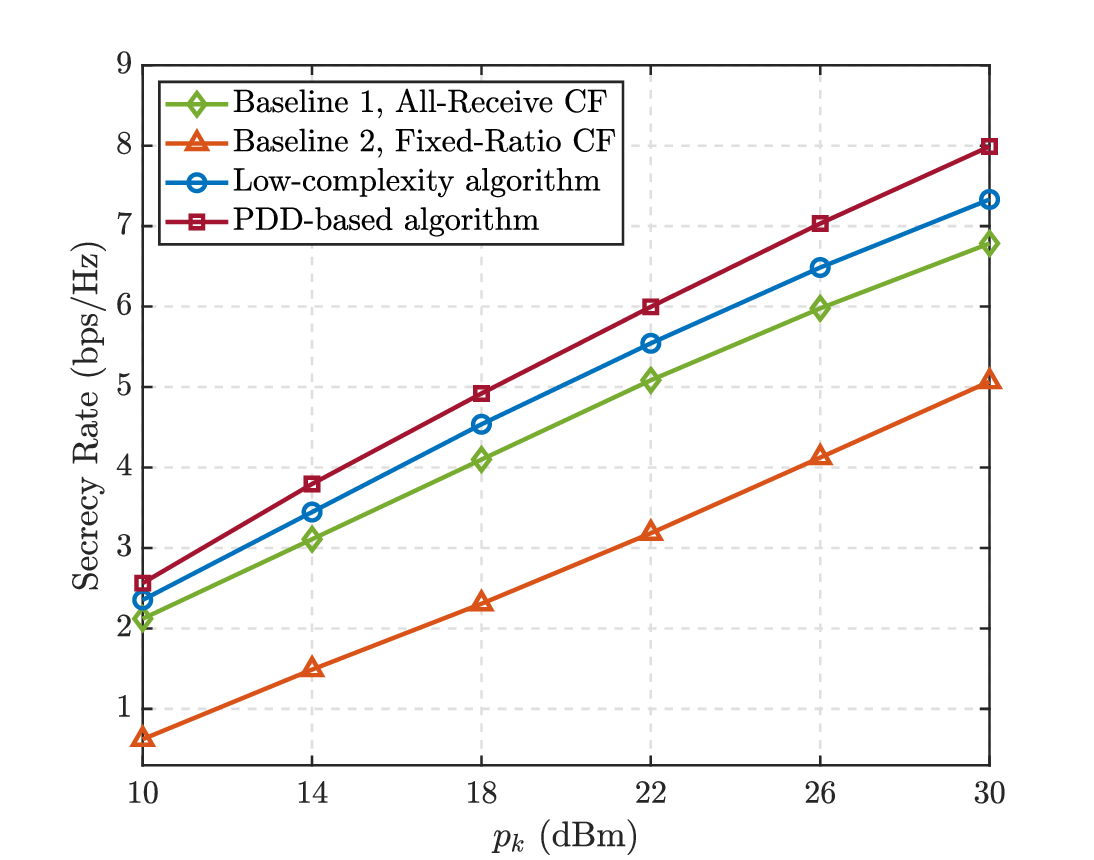}
\caption{Secrecy rate versus UAV transmit power, $p_k$.}\label{fig5}
\end{figure}

\begin{figure}[!t]
\centering
\includegraphics[width=3.2in]{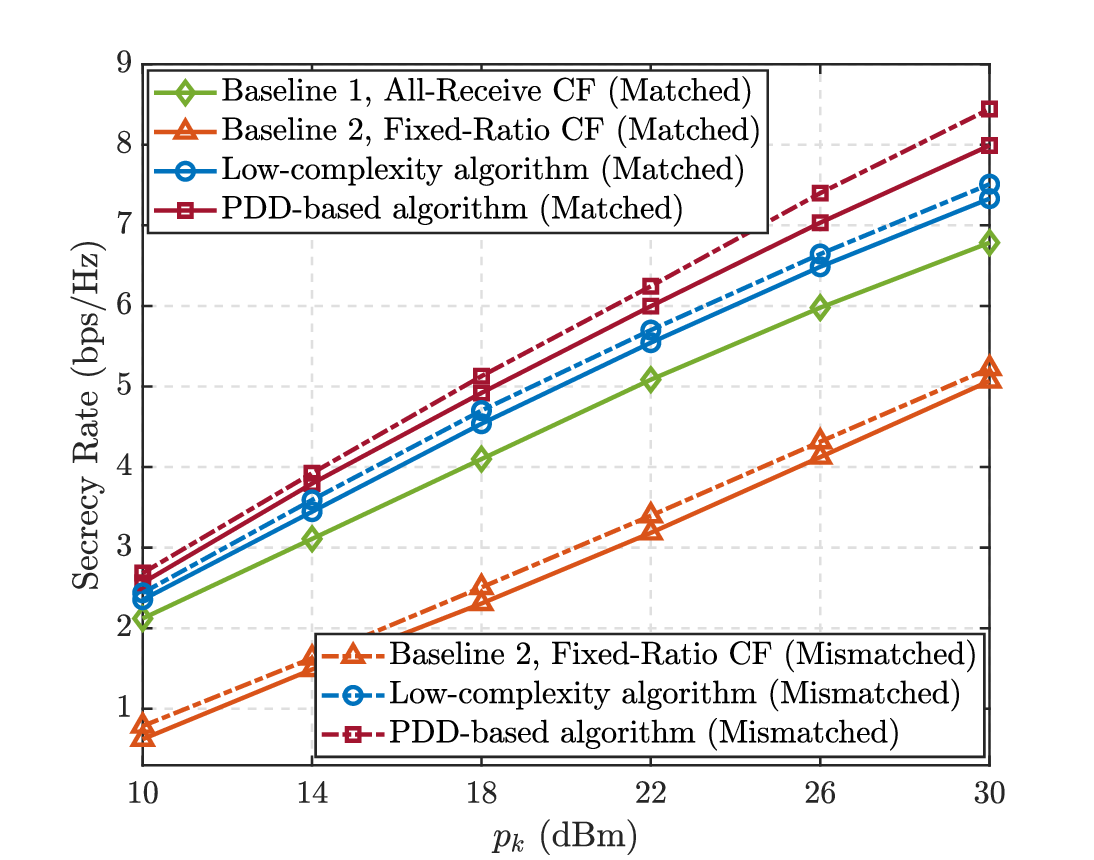}
\caption{Secrecy rate under matched/mismatched assumptions.}\label{fig6}
\end{figure}

Fig.~\ref{fig4} plots the secrecy rate as a function of the number of APs, $M$. It is observed that the secrecy rate increases consistently with more APs for both proposed schemes and the baselines. This improvement can be attributed to two main factors: (i) additional APs provide higher spatial diversity gains in cooperative reception, enhancing the legitimate signal quality; and (ii) additional APs enable more APs to be flexibly assigned as T-APs for AN transmission, effectively suppressing the eavesdropper. Notably, when $M=4$, the gain of our schemes over Baseline~2 is less than $2~\text{bps/Hz}$, whereas with $M=8$, the gain increases to nearly $3~\text{bps/Hz}$, reflecting that the joint optimization of mode selection and AN design becomes increasingly beneficial as more APs are available.

Fig.~\ref{fig5} presents the secrecy rate as the UAV transmit power $p_k$ increases from $10$ to $30~\text{dBm}$. As expected, all schemes benefit from higher UAV power, since stronger desired signals improve the uplink SINR at the R-APs. This allows the system to allocate more APs as T-APs for AN transmission, thereby boosting secrecy performance. Moreover, the performance gap between the proposed schemes and the baselines widens with increasing $p_k$, as the joint optimization more effectively exploits the stronger legitimate signals to suppress the eavesdropper. Among all schemes, the PDD-based joint optimization achieves the highest secrecy rate, while the low-complexity heuristic maintains over $90\%$ of this performance.

In Figs.~\ref{fig3}-\ref{fig5}, the AN design is optimized under the assumption that Eve employs an MMSE receiver, i.e., the optimal linear detector. Fig.~\ref{fig6} further evaluates the secrecy rate under matched/mismatched eavesdropper receiver assumptions. The curve labeled ``Matched" corresponds to the scenario where Eve indeed uses MMSE, whereas the ``Mismatched" curve represents the case where Eve instead adopts a ZF receiver. The mismatched configuration achieves a higher secrecy rate for two reasons. First, ZF generally underperforms MMSE due to its susceptibility to noise amplification, which weakens Eve’s detection capability. Second, the proposed AN design exhibits strong robustness, as the optimized AN continues to effectively suppress Eve’s signal reception even when the actual receiver differs from the assumed model. These results confirm the robustness of the proposed flexible-duplex CF framework and highlight the importance of assuming the optimal MMSE receiver during optimization to guarantee worst-case secrecy performance under potential model mismatches.

\begin{figure}[!t]
\centering
\includegraphics[width=3.2in]{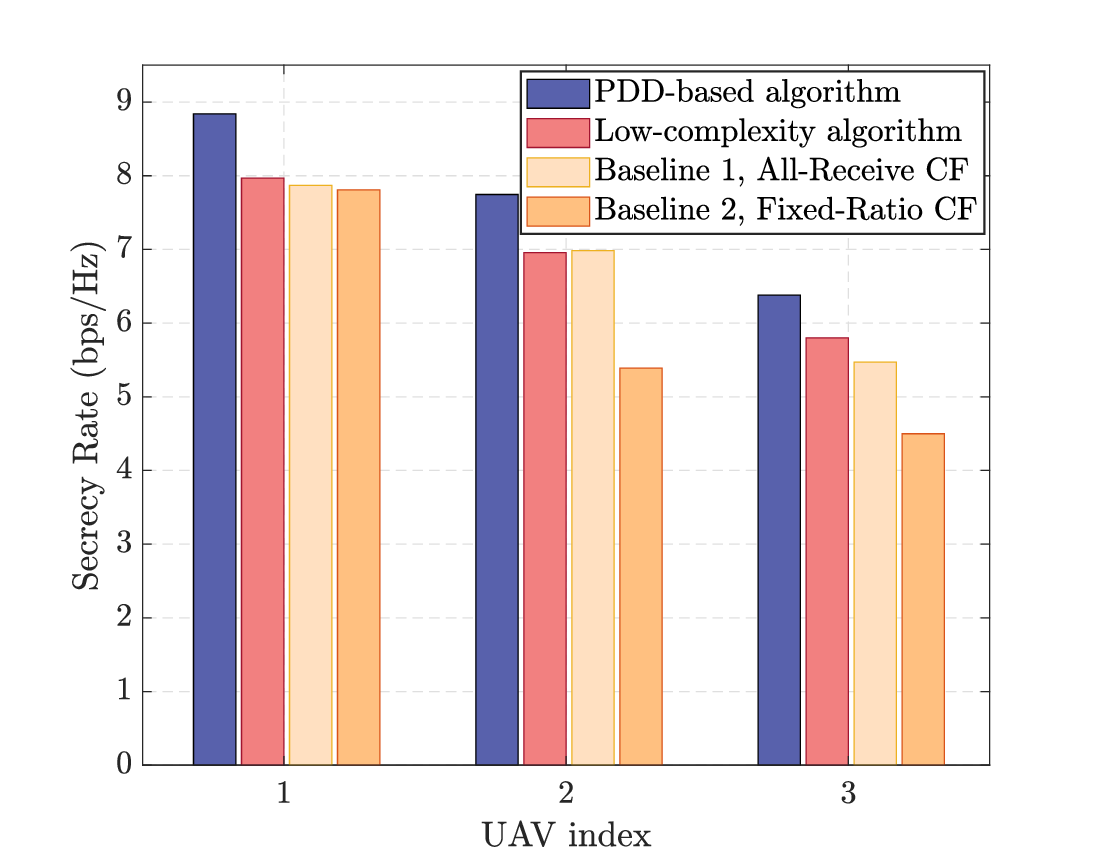}
\caption{Fairness evaluation of UAV secrecy rates.}\label{fig7}
\end{figure}
Fig.~\ref{fig7} illustrates the secrecy rates of individual UAVs under the proposed PDD-based scheme, the low-complexity scheme, and the two baselines. As observed, the PDD algorithm, which maximizes the minimum secrecy rate among all UAVs, achieves substantial gains across the board. In particular, compared with two baselines, the proposed PDD-based method significantly improves the secrecy rates of all UAVs. Furthermore, by adopting the max-min secrecy rate criterion, the low-complexity scheme maintains relatively stable performance across the UAVs. Consequently, the gap between the best and worst UAVs is narrower, demonstrating a more balanced and fair distribution of secrecy rates.

\section{Conclusion}
This work proposed a flexible-duplex CF architecture for secure uplink communications in LAWNs, where each AP can operate in either reception or jamming mode. A joint optimization strategy with closed-form combiners and a PDD-based algorithm was developed, alongside a sequential low-complexity scheme to improve scalability. Simulation results demonstrate that the flexible-duplex design significantly improves secrecy rates over baselines, while the low-complexity scheme offers an effective performance–complexity tradeoff suitable for large-scale deployments.

\begin{appendices}

\section{Proof of Proposition \ref{proposition1}} \label{appb}
We first note that if the $m$-th AP is operated as a T-AP (i.e., $x_m^C =0$), it no longer performs signal reception, and thus its corresponding combining vector satisfies $\mathbf{u}_{m,k}=\mathbf{0}$. Hence, we can conclude that
\begin{align}
    \mathbf{u}_k^H\mathbf{S}_c\mathbf{u}_k \!=\!\sum_{m=1}^M x_m^C \Vert \mathbf{u}_{m,k}\Vert ^2 \!=\!\sum_{m=1}^M  \Vert \mathbf{u}_{m,k}\Vert ^2 = \Vert \mathbf{u}_{k}\Vert ^2. 
\end{align}
By substituting $\mathbf{u}_k^H\mathbf{S}_c\mathbf{u}_k$ with $\Vert \mathbf{u}_{k}\Vert ^2$, the SINR term in (\ref{rsinr}) is reformulated as  
\begin{align}
    \gamma_k= \frac{p_k\left|\mathbf{u}_k^H\mathbf{S}_c\mathbf{h}_k\right|^2}{\mathbf{u}_k^H\mathbf{\Sigma}_k\mathbf{u}_k},
\end{align}
where $\mathbf{\Sigma}_k\!\triangleq \!\sum_{i\neq k}^{K}{p_i\mathbf{S}_c\mathbf{h}_i\mathbf{h}_i^H\mathbf{S}_c}\!+\!\sum_{m=1}^{M}\!{x_m^J\mathbf{S}_c\mathbf{H}_m\!\mathbf{V}_m\mathbf{H}_m^H\mathbf{S}_c}\!+\!\sigma_n^2\mathbf{I}$. Now, the maximization of $\gamma_k$  belongs to the problem of generalized Rayleigh quotient, and the optimal $\mathbf{u}_k$ is equal to $p_k \mathbf{\Sigma}_k^{-1}\mathbf{S}_c \mathbf{h}_k$ \cite{yaotwc,rayq}. The proof completes.

\section{Proof of Proposition \ref{proposition2}} \label{app3}
We first decompose $\mathbf{V}_m+\rho \mathbf{Z}_{4,m}$ into the sum of Hermitian and skew-Hermitian matrices, i.e., $ \mathbf{A}_m\!+\!\mathbf{B}_m$, where 
\begin{align}
    &\mathbf{A}_m \triangleq \frac{\mathbf{V}_m+\rho\mathbf{Z}_{4,m}+\mathbf{V}_m^H+\rho\mathbf{Z}_{4,m}^H}{2},\nonumber \\
    & \mathbf{B}_m \triangleq \frac{\mathbf{V}_m+\rho\mathbf{Z}_{4,m}-\mathbf{V}_m^H-\rho\mathbf{Z}_{4,m}^H}{2}.
\end{align}
Then, the objective in (\ref{BCD3}) is rewritten as
\begin{align}
    &\big\Vert\mathbf{V}_m-{\widetilde{\mathbf{V}}}_m+\rho\mathbf{Z}_{4,m}\big\Vert_F^2 =\big\Vert{\widetilde{\mathbf{V}}}_m-\mathbf{A}_m-\mathbf{B}_m\big\Vert_F^2  \nonumber \\
    =&\big\Vert\!{\widetilde{\mathbf{V}}}_m\!\!-\!\!\mathbf{A}_m\!\big\Vert_F^2\!+\!\!\big\Vert\mathbf{B}_m\!\big\Vert_F^2\!-\!2\Re\!\!\left\{\!\mathrm{Tr}\!\left(\! ({\widetilde{\mathbf{V}}}_m\!\!-\!\!\mathbf{A}_m)^H\mathbf{B}_m\right)\! \right\}.
\end{align}
We note that ${\widetilde{\mathbf{V}}}_m-\mathbf{A}_m$ is a Hermitian matrix while $\mathbf{B}_m$ is a skew-Hermitian matrix. Hence, it is easy to verify that $\Re\left\{\mathrm{Tr}\left( ({\widetilde{\mathbf{V}}}_m-\mathbf{A}_m)^H\mathbf{B}_m\right) \right\}=0$. Equivalently, the subproblem in (\ref{BCD3}) is reformulated as
\begin{align}
    \mathop{\text{minimize}}_{{\widetilde{\mathbf{V}}}_m\succeq \mathbf{0}}\quad&\big\Vert {\widetilde{\mathbf{V}}}_m-\mathbf{A}_{m}\big\Vert_F^2\nonumber\\
   \text{subject to}\quad & \mathrm{Tr}\left({\widetilde{\mathbf{V}}}_m\right)\le P_m.
\end{align}
Based on the unitary invariance of the Frobenius norm, we equivalently transform the above problem as
\begin{align}
    \mathop{\text{minimize}}_{x_{m,1}, \cdots,x_{m,N_a}}\quad&\sum_{n=1}^{N_a}\left(x_{m,n}-y_{m,n}\right)^2\nonumber\\
   \text{subject to}\quad & \sum_{n=1}^{N_a}x_{m,n}\le P_m, \forall m, \nonumber\\
   & x_{m,n}\geq0, \forall m.
\end{align}
The optimal solution to this problem is quite straightforward, as illustrated in (\ref{opteig}), and we complete the proof.

\end{appendices}

\vspace{-5.pt}

\bibliographystyle{IEEEtran}
\bibliography{IEEEabrv,reference}




\end{document}